\pgfplotsset{grid style={dashed,gray}}
\pgfplotsset{minor grid style={dotted,gray}}
\pgfplotsset{major grid style={dashed,gray}}
\newacro{5g} [5G] {fifth-generation}
\newacro{6g} [6G] {sixth-generation}
\newacro{isac} [ISAC] {integrated sensing and communication}
\newacro{ul} [UL] {uplink}
\newacro{mu} [MU] {multi-user}
\newacro{csi} [CSI] {channel state information}
\newacro{em} [EM] {electromagnetic}
\newacro{siso} [SISO] {single input single output}
\newacro{miso} [MISO] {multiple input single output}
\newacro{mimo} [MIMO] {multiple input multiple output}
\newacro{xl-mimo} [XL-MISO] {extremely large-scale MIMO}
\newacro{ras} [RAs] {reconfigurable antennas}
\newacro{los} [LoS] {line-of-sight}
\newacro{nlos} [NLoS] {non-line-of-sight}
\newacro{shod} [SHOD] {spherical harmonious orthogonal decomposition}
\newacro{ofdm} [OFDM] {orthogonal frequency division multiplexing}
\newacro{dof} [DoF] {degree of freedom}
\newacro{fim} [FIM] {Fisher information matrix}
\newacro{em} [EM] {electromagnetics}
\newacro{er-fas} [ER-FAS] {electromagnetically reconfigurable FAS}
\newacro{eras} [ERAs] {electromagnetically reconfigurable antennas}
\newacro{aod} [AOD] {angle-of-departure}
\newacro{aoa} [AOA] {angle-of-arrival}
\newacro{aoas} [AOAs] {angles-of-arrival}
\newacro{sp} [SP] {scatter point}
\newacro{ml} [ML] {maximum likelihood}
\newacro{mse} [MSE] {mean square error}
\newacro{snr} [SNR] {signal-to-noise ratio}
\newacro{lmr} [LMR] {line-of-sight to multipath ratio}
\newacro{rmse} [RMSE] {root mean square error}
\newacro{crb} [CRB] {Cram\'er-Rao bound}
\newacro{peb} [PEB] {position error bound}
\newacro{kld} [KLD] {Kullback–Leibler divergence}
\newacro{siso} [SISO] {single-input-single-output}
\newacro{mimo} [MIMO] {multiple-input multiple-output}
\newacro{mcrb} [MCRB] {misspecified Cram\'er-Rao bound}
\newacro{bs} [BS] {base station}
\newacro{ue} [UE] {user equipment}
\newacro{arv} [ARV] {array response vector}
\newacro{upa} [UPA] {uniform planar array}
\newacro{rf} [RF] {radio frequency}
\newacro{bb} [BB] {baseband}
\newacro{ris} [RIS] {reconfigurable intelligent surface}
\newacro{rms} [RMS] {root mean square}
\newacro{psd} [PSD] {positive semidefinite}
\newacro{lmi} [LMI] {linear matrix inequality}
\newacro{bca} [BCA] {block-coordinate ascent}
\newacro{bcd} [BCD] {block-coordinate descent}
\newacro{ma} [MA] {movable antenna}
\newacro{6dma} [6DMA] {six-dimensional movable antenna}
\newacro{pra} [PRA] {pattern reconfigurable antenna}
\newacro{ra} [RA] {reconfigurable antenna}
\newacro{pra_p} [PRAs] {pattern reconfigurable antennas}
\newacro{espar} [ESPAR] {electronically steerable parasitic array radiator}
\newacro{music} [MUSIC] {multiple signal classification}
\newacro{rss} [RSS] {received signal strength}
\newacro{toa} [TOA] {time of arrival}
\newacro{wsn} [WSN] {wireless sensor network}
\newacro{nm} [NM] {Nelder-Mead}
\newacro{ls} [LS] {least-squares}
\newacro{sdp} [SDP] {semidefinite program}
\newacro{iot} [IoT] {internet of things}
\newacro{mpc} [MPC] {multipath component}
\newacro{mpc_p} [MPCs] {multipath components}
\newacro{fas} [FAS] {fluid antenna system}
\newacro{fas_p} [FAS] {fluid antenna systems}
\newacro{sr-fas} [SR-FAS] {spatially reconfigurable FAS}
\newacro{ris} [RIS] {reconfigurable intelligent surface}
\newacro{ris_p} [RISs] {reconfigurable intelligent surfaces}
\def\BState{\State\hskip-\ALG@thistlm}
\newcommand*{\rom}[1]{\expandafter\@slowromancap\romannumeral #1@}
\newcommand{\multiline}[1]{%
  \begin{tabularx}{\dimexpr\linewidth-\ALG@thistlm}[t]{@{}X@{}}
    #1
  \end{tabularx}
}
\title{
Hybrid Codebook Design for Localization Using Electromagnetically Reconfigurable 

Fluid Antenna System
}
\author{
Alireza Fadakar, \emph{Graduate Student Member},
Yuchen Zhang, \emph{Member, IEEE}, 
Hui Chen, \emph{Member, IEEE}, 
Musa Furkan Keskin, \emph{Member, IEEE},
Henk Wymeersch, \emph{Fellow, IEEE},
and
Andreas F. Molisch, \emph{Fellow, IEEE}
\thanks{A. Fadakar and A. F. Molisch are with the Ming Hsieh Department of Electrical and Computer Engineering, University of Southern California, Los Angeles, California, USA; E-mail: \{fadakarg, molisch\}@usc.edu.}
\thanks{Y. Zhang is with the Electrical and Computer Engineering Program, Computer, Electrical and Mathematical Sciences and Engineering (CEMSE), King Abdullah University of Science and Technology (KAUST), Thuwal 23955-6900, Kingdom of Saudi Arabia; E-mail: yuchen.zhang@kaust.edu.sa.}
\thanks{H. Chen, M. F. Keskin, and H. Wymeersch are with the Department of Electrical Engineering, Chalmers University of Technology, Sweden; E-mails: \{hui.chen, furkan, henkw\}@chalmers.se.}

\thanks{
This work is supported, in part, 
by the National Science Foundation (Grants 2229535 and 2106602), 
in part by the KAUST Global Fellowship Program under Award No. RFS-2025-6844, and
in part by the Swedish Research Council (Grants 2022-03007 and 2024-04390).}
}
\theoremstyle{plain}
\newtheoremstyle{iremark}
  {\topsep}   
  {\topsep}   
  {\upshape}  
  {0.2in}       
  {\itshape}  
  {.}         
  {5pt plus 1pt minus 1pt} 
  {\thmname{#1}\thmnumber{ \itshape#2}\thmnote{ (#3)}} 
\newtheorem{theorem}{Theorem}
\newtheorem{lemma}[theorem]{Lemma}
\newtheorem{remark}{Remark}
\newtheorem{proposition}{Proposition}
\theoremstyle{definition}
\newtheorem*{proof}{Proof}
\newcommand*\rel@kern[1]{\kern#1\dimexpr\macc@kerna}
\newcommand*\widebar[1]{%
  \begingroup
  \def\mathaccent##1##2{%
    \rel@kern{0.8}%
    \overline{\rel@kern{-0.8}\macc@nucleus\rel@kern{0.2}}%
    \rel@kern{-0.2}%
  }%
  \macc@depth\@ne
  \let\math@bgroup\@empty \let\math@egroup\macc@set@skewchar
  \mathsurround\z@ \frozen@everymath{\mathgroup\macc@group\relax}%
  \macc@set@skewchar\relax
  \let\mathaccentV\macc@nested@a
  \macc@nested@a\relax111{#1}%
  \endgroup
}
\begin{document}

\maketitle
\begin{abstract}
Electromagnetically reconfigurable fluid antenna systems (ER-FAS) introduce additional degrees of freedom in the electromagnetic (EM) domain by dynamically steering per‑antenna radiation patterns, thereby enhancing power efficiency in wireless links. 
Unlike prior works on spatially reconfigurable FAS, which adjust element positions, ER-FAS provides direct control over each element’s EM characteristics to realize on‑demand beam‑pattern shaping. 
While existing studies have exploited ER-FAS to boost spectral efficiency, this paper explores its application for downlink localization. 
We consider a multiple-input single-output (MISO) system in which a multi‑antenna ER-FAS at the base station serves a single‑antenna user equipment (UE). 
We consider two reconfigurability paradigms: 
(i) a synthesis model where each antenna generates desired beampatterns from a finite set of EM basis functions, and 
(ii) a finite-state selection model in which each antenna selects a pattern from a predefined set of patterns. 
For both paradigms, we formulate the joint baseband (BB) and EM precoder design to minimize the UE position error bound. 
In the synthesis case we derive low-dimensional closed-form expressions for both the BB and EM precoders. 
For the finite-state model we obtain closed-form BB structures and propose a low-complexity block-coordinate-descent algorithm for EM pattern selection. 
Analytical bounds and extensive simulations show that the proposed hybrid designs for ER-FAS substantially improve UE positioning accuracy over traditional non-reconfigurable arrays.
\end{abstract}

\begin{IEEEkeywords}
Fluid antenna system, reconfigurable antennas, beamforming, beampattern, radiation pattern, localization.
\end{IEEEkeywords}
\bstctlcite{IEEEexample:BSTcontrol}
\section{Introduction}
\subsection{Background}
The evolution toward \ac{5g} and \ac{6g} networks demands orders‐of‐magnitude increases in data rate and capacity, driving the need for fundamental advances at the physical layer to overcome intrinsic performance limits \cite{wang2025RA,Wong2023fluid}. 
Central to this evolution is the enhancement of \ac{bs} capabilities, since total system throughput mainly depends on the efficient exploitation of limited physical resources \cite{Ying2024Reconfigurable,Chen2023_5G_6G}. 
While time-frequency resource allocation has been extensively optimized, spatial-domain methods, most notably \ac{mimo}, offer additional degrees of freedom. 
However, scaling antenna aperture on mast-mounted installations encounters practical constraints (e.g., structural limits) and becomes impractical in low‐density deployment scenarios. 
\Ac{fas_p} have recently emerged as a pivotal concept to inject new flexibility into the physical layer, positioning them as a promising enabler for \ac{6g} and beyond \cite{Wong2021fas, Wong2023fluid}. 
Unlike conventional arrays with fixed element positions, \ac{fas} can dynamically relocate the antenna element along a fluidic substrate to sample the most favorable spatial channels thereby achieving large spatial-diversity gains. 
Moreover, \ac{fas} can synergize with other frontier technologies such as \ac{ris_p}\cite{fadakar2025mutual}, massive \ac{mimo}\cite{zheng2025trihybrid}, and \ac{isac}\cite{Zhang2025Joint} to further boost network performance. 

\ac{fas} can be categorized into two main categories: \ac{sr-fas}, and \ac{er-fas} which are explained next.
\subsubsection{SR-FAS}
\ac{sr-fas} employs port-selection to reposition the radiating element, thereby exploiting spatial diversity while preserving the antenna's intrinsic \ac{em} properties \cite{New2024fluid_Tutorial}. 
\ac{sr-fas} is closely related to \ac{ma} technology \cite{Zhu2025MA_Tutorial,Zack2025tvt,Zack2025TWC}. 
\ac{ma} and \ac{sr-fas} both share the concept of spatial reconfiguration to improve channel conditions and communication performance \cite{New2024fluid_Tutorial}.
Specifically, \ac{ma} can be regarded as a position-only variant of \ac{sr-fas}. 
Hence, in most cases they can be interchangeable due to their similar concepts \cite{zhu2024historical}. 
However, \ac{sr-fas} further supports pixel‐based and shape/size reconfiguration, which is common in \ac{xl-mimo} deployments \cite{New2024fluid_Tutorial}. 
This introduces mutual coupling challenges in which antenna and circuit theory plays an important part to evaluate the performance of \ac{sr-fas}, whereas mechanical or liquid‐based movable designs in \ac{ma} confine coupling to active radiating elements. 
Moreover, \ac{sr-fas} further explore shape or size adaptation for applications such as cognitive radio networks, \ac{iot}, and body area networks \cite{Martinez2022Toward, Borda2019Coupling}.

\subsubsection{ER-FAS}
\ac{er-fas}, also referred to as \ac{ra}, retain a fixed physical location and reshape their metallic patterns or dielectric substrates to provide on-demand control over operating frequency, polarization, and radiation pattern \cite{wang2025RA,Song2019wideband, Rodrigo2014frequency}.
In this paper, we focus on radiation pattern-based \ac{er-fas}.
Compared to \ac{sr-fas}, the radiation properties of the \ac{er-fas} can be dynamically adjusted in real time by altering the fluid state of the \ac{rf} radiator.
There are various fabrication technologies to implement \ac{er-fas}, notably pixel-based parasitic layouts \cite{Zhang2025pixel, liu2025tri}, 
\ac{espar} configurations, consisting of a single active radiator encircled by passive parasitic elements \cite{Han2021Characteristic, liu2025tri}, 
and liquid-metal fluidic implementations \cite{wang2025RA}. 
For instance in \cite{Zhang2025pixel}, a pixel-based \ac{er-fas} with high switching speed is presented, where reconfigurability is realized by changing the interconnection states between pixels \cite{Lotfi2017Printed}. 
The work in \cite{wang2025RA} proposes a novel \ac{er-fas} architecture for wireless communications, in which the radiation pattern of each array element is reconfigurable via software-controlled fluidic actuation  effectively circumventing the mutual coupling effect in pixel-based fabrications.

However, these implementations present practical trade-offs.  
Pixel-based designs benefit from fast, electronically controlled reconfiguration and well-established fabrication processes, but can exhibit unwanted mutual coupling between active and inactive pixels \cite{wang2025RA}.  
By contrast, liquid-metal fluidic implementations can offer continuous geometry control and conformability that help reduce certain coupling effects and enable stretchable or conformal form factors, yet they introduce engineering challenges such as limited reconfiguration speed.  
It is noteworthy that the mathematical models and optimization framework developed in this paper are agnostic to the specific fabrication technology and apply to all existing classes of implementations.  
For consistency with prior work, we use the \ac{er-fas} terminology throughout this paper \cite{wang2025RA}.

\vspace{-0.3cm}
\subsection{Related Works}
\subsubsection{Wireless Localization/Sensing via SR-FAS}
Radio localization has been extensively studied using fixed position antennas. 
More recently, preliminary studies have investigated \ac{aoa} estimation using \ac{sr-fas} systems, which introduce additional spatial degrees of freedom \cite{xu2025future_doa, xu2025fluid_doa}. 
%
The advantages of \ac{sr-fas} have been also demonstrated for \ac{isac} systems \cite{Li2025Movable, Lyu2025Movable, Ding2025Movable, Hao2025fluid_ISAC, Wang2024fluid_DRL}. 
The work in \cite{Li2025Movable} demonstrates that dynamically repositioning antenna elements can substantially enhance spectral efficiency, beamforming precision, coverage adaptability, and transmit-power efficiency in \ac{isac} systems. 
The work in \cite{Hao2025fluid_ISAC} jointly optimizes antenna positions in \ac{sr-fas} and dual‐functional beamforming via alternating optimization to maximize sensing \ac{snr} under movement and interference constraints, with robust extensions to imperfect \ac{csi}.

\subsubsection{Wireless Localization/Sensing via ER-FAS}
Existing \ac{er-fas} works for sensing often adopt \ac{espar} architectures.
By dynamically adjusting the reactive loads of the parasitic elements, each element of the \ac{er-fas} can directionally steer its beampatterns, supporting \ac{aoa} estimation \cite{Taillefer2005Direction, Qian2015Direction, Qian2016Direction, Kulas2017Simple, Kulas2018RSS, Tarkowski2019RSS, Bshara2021Noncooperative, Rzymowski2016Single, Groth2023Fast}. 
The work in \cite{Taillefer2005Direction} is one of the earliest applications of \ac{er-fas} for \ac{aoa} estimation using the \ac{music} algorithm. 
In \cite{Qian2015Direction,Qian2016Direction}, the authors develop compressive‐sensing frameworks for \ac{er-fas}-enabled \ac{aoa} estimation via sparse representations, and \cite{Rzymowski2016Single} proposes a novel single‐anchor indoor localization scheme based on \ac{er-fas}. 
In \cite{Kulas2017Simple}, 2D-\ac{aoa} estimation is achieved using only \ac{rss} measurements, while \cite{Kulas2018RSS} demonstrates that interpolating calibrated radiation patterns can enhance \ac{er-fas}-based \ac{aoa} estimation in \ac{wsn} nodes. 
The authors in \cite{Tarkowski2019RSS} further improve performance by applying support‐vector classification to \ac{rss} outputs.  
More recently, \cite{Bshara2021Noncooperative} illustrates rapid optimal beam selection at mmWave frequencies, and \cite{Groth2023Fast} presents a calibration‐free single‐anchor indoor localization algorithm for \ac{bs}s equipped with \ac{er-fas}. 
\vspace{-0.5cm}
\subsection{Motivation}
Although \ac{er-fas} enabled systems have demonstrated promising gains in communication‐centric applications \cite{wang2025RA,liu2025tri,zheng2025trihybrid,Ying2024Reconfigurable}, they still remain in the early stages of development, and their impact on wireless sensing particularly localization has yet to be examined. 
In particular, the existing works on \ac{er-fas} in wireless sensing and localization are mostly limited to single‐antenna \ac{bs}. 
Thus, they cannot readily extend to modern multi-antenna systems in \ac{5g}, \ac{6g}, and beyond, where integrating \ac{fas} requires the joint optimization of \ac{em} and \ac{bb} precoders \cite{Wong2023fluid}.
Moreover, the existing works yield suboptimal performance due to heuristic \ac{em} precoding design without discussion on optimal designs and theoretical bounds. 
To the best of our knowledge, this is the first study to jointly optimize both \ac{bb} and \ac{em} precoders in a \ac{miso} system employing \ac{er-fas} to maximize downlink localization accuracy in a wideband \ac{ofdm} wireless system. 

\subsection{Contributions}
The key contributions are summarized as follows:
\begin{itemize}
\item 
We rigorously derive the signal models and corresponding \ac{fim} for localization with \ac{er-fas} under two reconfigurability paradigms: 
(i) the synthesis model, in which each antenna synthesizes a desired beampattern from a finite set of basis functions; 
and (ii) the finite-state selection model, in which each antenna selects one pattern from a discrete library of physically realizable patterns. 
For both cases, we present theoretical bounds on the achievable localization accuracy. 
\item 
For the synthesis model, assuming perfect knowledge of the \ac{ue} position, we derive the optimal low‐dimensional structures of the \ac{bb} and \ac{em} precoders, which reveal that only three distinct beams are required.  
Leveraging these solutions, we then develop a robust, codebook‐based design that maintains high localization accuracy under \ac{ue}‐position uncertainty. 
\item 
For the finite-state selection model with perfect \ac{ue}-position knowledge, we decouple the \ac{bb} and \ac{em} precoder designs. 
In particular, first, by assuming fixed \ac{em} precoders, we derive the optimal low-dimensional structure of the \ac{bb} precoders and show that only three beams are needed similar to the synthesis case. 
Building on our results from the synthesis case, we then introduce a low-complexity \ac{bcd} algorithm to optimize the \ac{em} precoders for each of these three beams. 
Finally, by integrating these closed‐form and iterative strategies, we develop a robust, codebook‐based scheme that maintains high localization accuracy under \ac{ue}‐position uncertainty.
\item 
We develop a low-complexity \ac{ml} based localization algorithm and perform comprehensive simulations to evaluate the \ac{peb} and localization \ac{rmse}, benchmarking against traditional non‐reconfigurable arrays. 
Our results demonstrate substantial gains compared to existing works utilizing traditional non-reconfigurable arrays under both the synthesis and finite-state selection paradigms. 
For the finite-state selection case, we adopt the physically realizable pattern library obtained in \cite{wang2025RA}.
\end{itemize}

\paragraph*{Notation}
Matrices are denoted by boldface uppercase letters (e.g., $\mathbf{X}$) and vectors by boldface lowercase letters (e.g., $\mathbf{x}$). 
The superscripts $(\cdot)^{\mathsf{T}}$, $(\cdot)^{\mathsf{H}}$, and $(\cdot)^{-1}$ denote transpose, Hermitian (conjugate transpose), and matrix inverse, respectively.  
The expression $[\mathbf{x}_1,\ldots,\mathbf{x}_n]$ denotes horizontal concatenation of vectors $\mathbf{x}_1,\ldots,\mathbf{x}_n$, $\mathrm{diag}(\mathbf{x})$ is the diagonal matrix with the entries of $\mathbf{x}$ on its main diagonal, and $\mathrm{blkdiag}\bigl\{\mathbf{A}_{1},\,\dots,\,\mathbf{A}_{n}\bigr\}$ is a block diagonal matrix with diagonal blocks $\{\mathbf{A}_{i}\}_{i=1}^{n}$. 
The $n\times n$ identity matrix is $\mathbf{I}_n$. 
The notations $\lVert\mathbf{x}\rVert$ and $\lVert\mathbf{x}\rVert_0$ denote the $\ell_2$ (Euclidean) norm and the $\ell_0$ norm (the number of nonzero entries of $\mathbf{x}$), respectively.
We use $\mathbf{A}\otimes\mathbf{B}$, and $\mathbf{A}\odot\mathbf{B}$, for the Kronecker product, and Hadamard product, respectively. 
Finally, $\boldsymbol{\Pi}_{\mathbf{X}}=\mathbf{X}(\mathbf{X}^{\mathsf{H}}\mathbf{X})^{-1}\mathbf{X}^{\mathsf{H}}$ denotes the orthogonal projector onto the column space of $\mathbf{X}$, and $\boldsymbol{\Pi}_{\mathbf{X}}^{\perp}=\mathbf{I}-\boldsymbol{\Pi}_{\mathbf{X}}$.

\section{System and Signal Model}
As shown in Fig.~\ref{fig:system-model}, we consider a \ac{miso} wireless system, comprising an \ac{er-fas} at the \ac{bs} with a \ac{upa} structure consisting of $M=M^{h}M^{v}$ antennas with $M^h$ and $M^v$ being the number of rows and columns, respectively. 
Each element of the \ac{er-fas} can choose one of several predefined states or radiation patterns in the finite-state selection model or synthesize the desired pattern. 
For instance, in Fig.~\ref{fig:heatmap_states}, the heatmaps of $8$ states are shown illustrating the radiation patterns in the finite‐state selection model. 
The details will be provided in Sec.~\ref{sec:reconfigurability_models}. 
Furthermore, we assume a \ac{ue} equipped with a single antenna, and $I$ \ac{mpc_p} to model interference based on a geometric channel model which will be detailed in the next subsection.
It is assumed that the \ac{bs} transmits $N_t$ \ac{ofdm} downlink transmissions to the \ac{ue}, each with $N_s$ subcarriers.  
\begin{figure*}[ht]
\centering
\begin{subfigure}[t]{0.25\textwidth}
\centering
\includegraphics[height=0.19\textheight]{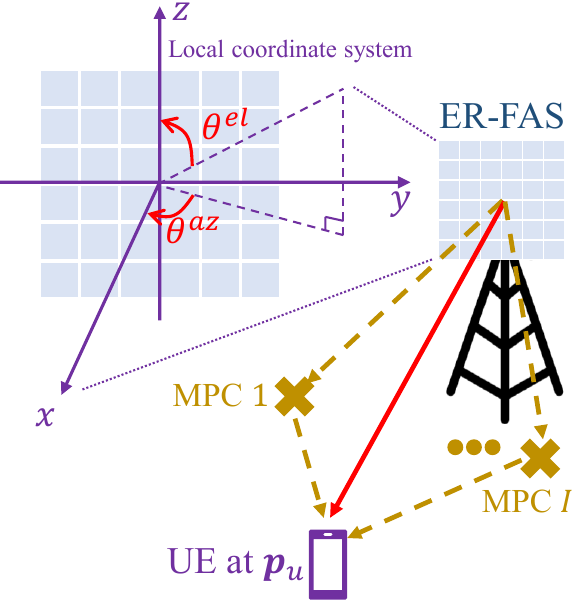}
\caption{}
\label{fig:system-model}
\end{subfigure}%
\hfill
\begin{subfigure}[t]{0.75\textwidth}
\centering
\includegraphics[height=0.21\textheight]{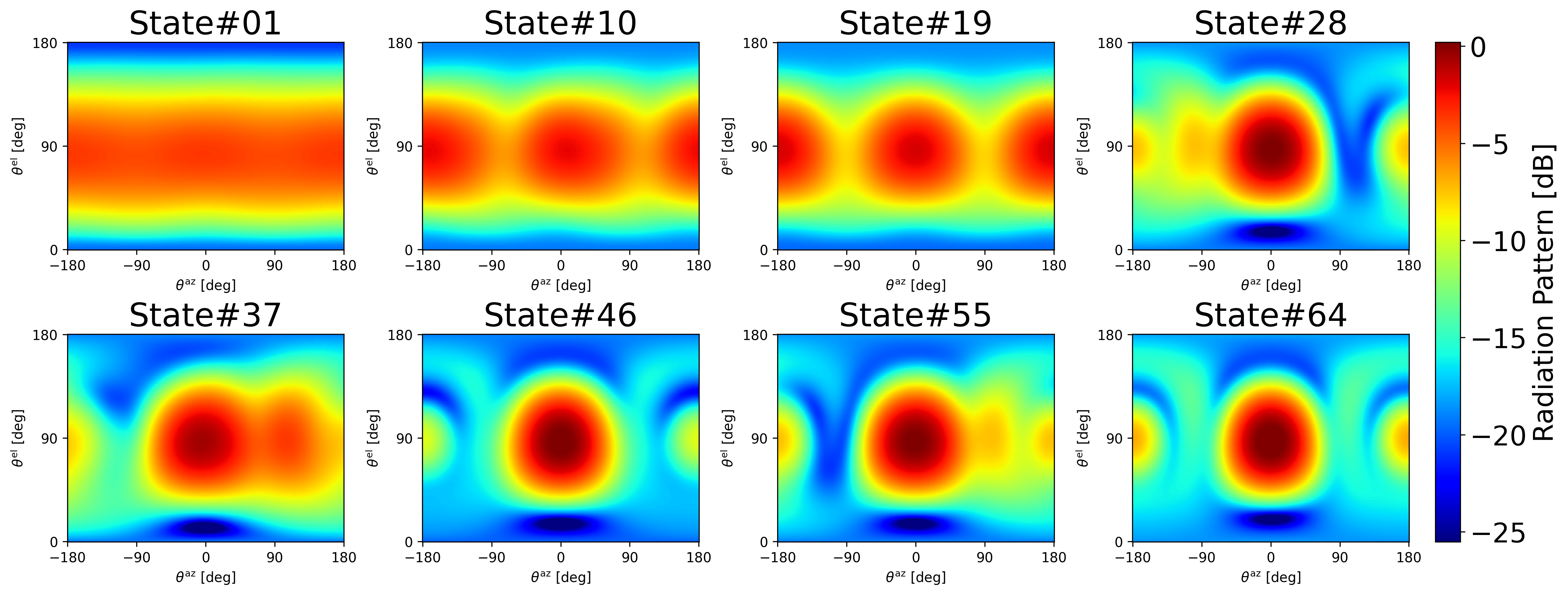}
\caption{}
\label{fig:heatmap_states}
\end{subfigure}
\caption{%
(a) Considered \ac{er-fas} assisted system. This paper aims to jointly optimizing \ac{bb} and \ac{em} precoders to maximize \ac{ue} localization performance.
(b) The 2D heatmap radiation patterns (i.e., $\overline{b}_{s}(\boldsymbol{\theta})$) of $8$ states among the total of $S_{\text{max}}=64$ states. 
In this paper, the physically realizable patterns in \cite{wang2025RA} are used for simulations.
}
\label{fig:system_model_states}
\end{figure*}

The received signal in the $t$-th symbol on the $n$-th subcarrier is given by:  
\begin{equation}\label{eq:y-def}
y_{t,n}
=
\sqrt{P}
\mathbf{h}_{t,n}^{\mathsf{T}}
\mathbf{f}_{t}
+
v_{t,n},
\end{equation}
where $P$ denotes the transmit power, and the vector $\mathbf{f}_{t} \in \mathbb{C}^{M}$ denotes the \ac{bs} \ac{bb} precoder for the $t$-th transmission. 
To keep the transmit power $P$ constant over the entire transmissions the \ac{bb} precoders are assumed to satisfy $\sum_{t=1}^{N_t} \mathbf{f}_t^\mathsf{H}\mathbf{f}_t=1$.
%
Moreover, $\mathbf{h}_{t,n} \in \mathbb{C}^{M}$ represents the wireless channel between the \ac{bs} and \ac{ue}, while $v_{t,n} \sim \mathcal{CN}(0,\sigma^2_v)$ denotes the additive white Gaussian noise. 
It is noteworthy that the dependency of the channel on the time index $t$ is due to the time-variant nature of \ac{er-fas} which will be explained later.

Based on the geometric channel model, assuming the presence of a \ac{los} path and $I$ \ac{nlos} paths, the channel vector $\mathbf{h}_{t,n}$ can be expressed as \cite{wang2025RA, Ying2024Reconfigurable, fadakar2024multi}:
\begin{equation}\label{eq:ch-def}
\mathbf{h}_{t,n}
=
\sum_{i=0}^{I}
\alpha_{i}
e^{-j2\pi\tau_in\Delta f}
\mathbf{q}_{t}(\boldsymbol{\theta}_{i}),
\end{equation}
where $i=0$ corresponds to \ac{los} component, 
$\alpha_i = \rho_i e^{j \varphi_i}$ represents the complex gain of the $i$-th path, with $\rho_i$ and $\varphi_i$ denoting its modulus and phase components, respectively. 
Moreover, $\mathbf{q}_{t}(\boldsymbol{\theta}) \in \mathbb{C}^{M}$ represents the joint contribution of the \ac{arv} and the radiation patterns of the antennas at the \ac{bs} in the $t$-th transmission for a given \ac{aod} $\boldsymbol{\theta}$, and is defined as follows:
\begin{equation}\label{eq:q-def}
\mathbf{q}_{t}(\boldsymbol{\theta})
=
\mathbf{g}_{t}(\boldsymbol{\theta})
\odot 
\mathbf{a}(\boldsymbol{\theta}),
\end{equation}
where $\mathbf{g}_{t}(\boldsymbol{\theta}) \in \mathbb{C}^{M}$ represents the complex radiation patterns of the \ac{er-fas} elements, and $\mathbf{a}(\boldsymbol{\theta})$ denotes the \ac{arv} of the corresponding array.

In \eqref{eq:ch-def}, $\tau_i$ denotes the delay of the $i$-th path with $i=0$ accounting for the \ac{los} path:
\begin{equation}
\tau_{0}
=
\frac{\lVert \mathbf{p}_u-\mathbf{p}_b\rVert}{c},\ 
\tau_{i}
=
\frac{\lVert \mathbf{p}_u-\mathbf{p}_s^{(i)}\rVert
+\lVert \mathbf{p}_s^{(i)}-\mathbf{p}_b\rVert}{c},
\end{equation}
where $\mathbf{p}_u\in \mathbb{R}^{3}$ and $\mathbf{p}_b\in \mathbb{R}^{3}$ denote the position of the \ac{ue} and the center of the \ac{bs}, respectively. 
Moreover, $\mathbf{p}_s^{(i)} \in \mathbb{R}^3$ denotes the position of the $i$-th \ac{mpc} characterizing the corresponding \ac{nlos} path.  

The vector $\boldsymbol{\theta}_i = [\theta_i^{\text{el}}, \theta_i^{\text{az}}]^\mathsf{T}$ represents the elevation and azimuth \ac{aod} in the $i$-th path where $i=0$ corresponds to the \ac{los} path. 
These angles can be obtained as follows:
\begin{align}\label{eq:theta_phi_vals}
& \theta_0^{\text{el}}
=
\text{arccos} \left(
\frac{[\mathbf{p}_{u;b}]_{3}}{\lVert\mathbf{p}_{u;b}\rVert}
\right),\ 
\theta_0^{\text{az}}
=
\text{arctan2} (
[\mathbf{p}_{u;b}]_{2},[\mathbf{p}_{u;b}]_{1}
),\\
& 
\theta_i^{\text{el}}
=
\text{arccos} \left(
\frac{[\mathbf{p}_{s;b}^{(i)}]_{3}}{\lVert\mathbf{p}_{s;b}^{(i)}\rVert}
\right),\ 
\theta_i^{\text{az}}
=
\text{arctan2} (
[\mathbf{p}_{s;b}^{(i)}]_{2},[\mathbf{p}_{s;b}^{(i)}]_{1}
),
\end{align}
where 
$\mathbf{p}_{u;b}=\mathbf{R}_b^{\top}(\mathbf{p}_u-\mathbf{p}_b)$, and 
$\mathbf{p}_{s;b}^{(i)}=\mathbf{R}_b^{\top}(\mathbf{p}_s^{(i)}-\mathbf{p}_b)$
with $\mathbf{R}_b\in \mathbb{R}^{3\times 3}$ being the rotation matrix corresponding to the orientation of the \ac{bs}. 
The \ac{arv} $\mathbf{a}(\boldsymbol{\theta})$ is defined as:
\begin{equation}
\mathbf{a}(\boldsymbol{\theta})
=
e^{-j2\pi\omega^{h}\mathbf{k}(M^h)}
\otimes
e^{-j2\pi\omega^{v}\mathbf{k}(M^v)},
\end{equation}
where $\mathbf{k}(M) = [0, \dots, M-1]^\mathsf{T}$, and $\omega^{h}$ and $\omega^{v}$ represent the spatial horizontal and vertical frequencies, respectively.  
As shown in Fig.~\ref{fig:system-model}, if the \ac{upa} is positioned on the YoZ plane of the local Cartesian coordinate system of the \ac{bs}, then  
\begin{equation}
\omega^{h}
=
d
\sin(\theta^{\text{az}})
\sin(\theta^{\text{el}})/\lambda
,\ 
\omega^{v}
=
d
\cos(\theta^{\text{el}})/\lambda,
\end{equation}
where $d$, denotes the distance between two adjacent elements, and $\lambda$ is the wavelength. 
Moreover, $\theta^{\mathrm{az}},\theta^{\mathrm{el}}$ represent the azimuth and elevation angles in the \ac{bs}’s local Cartesian coordinates. 
Azimuth $\theta^{\mathrm{az}}\in[-\pi,\pi)$ is measured from $+y$ toward $+x$, and elevation $\theta^{\mathrm{el}}\in[0,\pi]$ is the angle with respect to $+z$ direction.
\vspace{-0.2cm}
\section{Modeling Reconfigurable Gains}\label{sec:reconfigurability_models}
The reconfigurability of the complex amplitude of \ac{er-fas} radiation patterns $\mathbf{g}_{t}(\boldsymbol{\theta})$ in \eqref{eq:q-def} can be captured via two complementary modeling paradigms \cite{zheng2025trihybrid}. 
In the first, an idealized \ac{er-fas} is assumed to synthesize arbitrary beampatterns on demand by projecting onto a pre-defined set of orthonormal basis functions \cite{liu2025tri,Ying2024Reconfigurable}. 
In the second model, motivated by practical implementations, each antenna element toggles among a finite collection of discrete states, with each state producing a unique radiation response. For instance, a concrete hardware realization is described in \cite{wang2025RA}. In what follows, we incorporate both the synthesis and the finite-state models into the unified signal framework of \eqref{eq:q-def}.
\vspace{-0.4cm}
\subsection{Synthesis Model}\label{sec:synthesis}
This model assumes that each element of the \ac{er-fas} is capable of synthesizing radiation patterns. 
A given beampattern can be described either in the angular domain \cite{liu2025tri} or via a spherical-harmonic expansion \cite{Ying2024Reconfigurable}.  
While the angular-domain formulation often leads to prohibitively large channel matrices, the spherical-domain representation exploits the band-limited nature of practical patterns to achieve a more compact parametrization \cite{zheng2025trihybrid}.  
Consequently, this work also adopts the spherical-domain framework which is detailed next.  

Let $\mathbf{b}(\boldsymbol{\theta}) = [b_{1}(\boldsymbol{\theta}), \dots, b_{Q}(\boldsymbol{\theta})]^{\mathsf{T}}$ denote a vector of $Q$ orthonormal basis functions. 
Hence, the radiation response of the $m$-th antenna element of the \ac{er-fas} at the $t$-th transmission is written as  
\begin{equation}\label{eq:gain-def}
\bigl[\mathbf{g}_{t}(\boldsymbol{\theta})\bigr]_{m}
=
\mathbf{e}_{m,t}^{\mathsf{H}}
\mathbf{b}(\boldsymbol{\theta}),
\end{equation}
where $\mathbf{e}_{m,t}\in\mathbb{C}^{Q}$ is the \ac{em} precoder vector that assigns weights to the basis functions to generate the desired beampattern of the $m$-th element of the \ac{er-fas} during the $t$-th transmission. 
To ensure compliance with the energy conservation law, we assume that $\lVert \mathbf{e}_{m,t}\rVert^2=1$ \cite{Ying2024Reconfigurable} (see Appendix~\ref{app:power_conservation} for proof). 
Let the coefficient matrix $\mathbf{E}_t\!\in\!\mathbb{C}^{M\times MQ}$ be defined as
\begin{equation}\label{def:E_synthesis}
\mathbf{E}_t = \mathrm{blkdiag}\bigl\{\mathbf{e}_{1,t}^{\mathsf{H}},\,\dots,\,\mathbf{e}_{M,t}^{\mathsf{H}}\bigr\}.
\end{equation}
Then, \eqref{eq:q-def} can be rewritten as:
\begin{equation}\label{eq:q_represent}
\mathbf{q}_{t}(\boldsymbol{\theta})
= \mathbf{E}_t\bigl(\mathbf{a}(\boldsymbol{\theta})\otimes\mathbf{b}(\boldsymbol{\theta})\bigr)
= \mathbf{E}_t\,\mathbf{c}(\boldsymbol{\theta}),
\end{equation}
where 
$
\mathbf{c}(\boldsymbol{\theta})
= \mathbf{a}(\boldsymbol{\theta})\otimes\mathbf{b}(\boldsymbol{\theta})
\;\in\;\mathbb{C}^{MQ}.
$
Substituting \eqref{eq:q_represent} into \eqref{eq:ch-def}, then that results into \eqref{eq:y-def}, and finally stacking all $N_s$ symbols within the $t$-th time transmission yields
\begin{equation}\label{eq:y-def-simp}
\mathbf{y}_{t}
=
\sum_{i=0}^{I}
\sqrt{P}\,\alpha_{i}\,
\mathbf{d}(\tau_i)\,
\mathbf{c}(\boldsymbol{\theta}_{i})^{\mathsf{T}}\,
\mathbf{E}_t^{\mathsf{T}}\,
\mathbf{f}_{t}
+
\mathbf{v}_{t}\,,
\end{equation}
where $\mathbf{d}(\tau)\in \mathbb{C}^{N_s}$ denotes the delay steering vector defined as:
\begin{equation}
\mathbf{d}(\tau)=[1,e^{^{-j2\pi \Delta f\tau}},\dots ,e^{-j2\pi (N_s-1)\Delta f\tau}]^{\mathsf{T}}.
\end{equation}
In this paper, we adopt \ac{shod} functions \cite{Costa2010Unified} for basis functions due to their simplicity.
Under the \ac{shod} paradigm, any radiation pattern admits an infinite-series expansion in spherical harmonics \cite{Ying2024Reconfigurable}. 
In this paper, we truncate these bases to the first $Q$ basis functions for simulations.
Further details on the definition of $b_{n}(\boldsymbol{\theta})$ via \ac{shod} are provided in Appendix~\ref{app:shod}.

\begin{remark}
\textbf{Basis‐agnostic methodology:}  
While \ac{shod} bases are used for simulations, our optimization and estimation techniques impose no special structure on these bases. 
Consequently, any complete set of orthonormal functions could be substituted in future work.  Although \ac{shod} may not yet be fully realized in practical hardware, it establishes the ultimate performance bound for \ac{er-fas} implementations \cite{zheng2025trihybrid,Ying2024Reconfigurable}. 
Early efforts, e.g., \cite{Han2021Characteristic, Zhang2025Compact_ESPAR}, have explored the design of physically realizable basis functions (which are also referred to as characteristic modes). 
However, this area is in its early stages and is an important direction for future research.  
\end{remark}
\vspace{-0.5cm}
\subsection{Finite State Selection Model}\label{sec:binary}
The finite‐state selection model provides the most faithful representation of practical hardware implementations \cite{zheng2025trihybrid}. 
We assume a set of $S$ candidate beampatterns, each parameterized by $\boldsymbol{\theta}$ and arranged as  
$
\overline{\mathbf{b}}(\boldsymbol{\theta})
=\bigl[
\overline{b}_{1}(\boldsymbol{\theta}),\dots,\overline{b}_{S}(\boldsymbol{\theta})\bigr]^{\mathsf{T}}
\in\mathbb{R}_{+}^{S}
$.
Accordingly, the radiation response of the $m$-th element of the \ac{er-fas} in the $t$-th transmission is expressed as  
\begin{equation}\label{eq:gain-def-discrete}
\bigl[\mathbf{g}_{t}(\boldsymbol{\theta})\bigr]_{m}
=
\overline{\mathbf{e}}_{m,t}^{\mathsf{T}}
\,
\overline{\mathbf{b}}(\boldsymbol{\theta}),
\end{equation}
where $\overline{\mathbf{e}}_{m,t}\in\{0,1\}^{S}$ is a one‐hot selection vector satisfying $\lVert\overline{\mathbf{e}}_{m,t}\rVert_{0}=1$, i.e., exactly one entry equals $1$. 
Moreover, we assume that the total power of each state equals $1$ i.e., $\int_{0}^{2\pi}\!\int_{0}^{\pi}
\lvert
\overline b_s(\boldsymbol{\theta})
\rvert^2
\;\sin\theta^{\text{el}}\,
d\theta^{\text{el}}\,d\theta^{\text{az}}
=1$.
This constraint ensures the energy conservation law (see Appendix~\ref{app:power_conservation} for proof).

Analogous to the synthesis case, we introduce the discrete‐state coefficient matrix for the $t$-th transmission as:
\begin{equation}
\overline{\mathbf{E}}_{t}
\;=\;
\mathrm{blkdiag}\bigl\{\overline{\mathbf{e}}_{1,t}^{\mathsf{T}},\,\dots,\,\overline{\mathbf{e}}_{M,t}^{\mathsf{T}}\bigr\}
\in\{0,1\}^{M\times MS}.
\end{equation}
Hence, \eqref{eq:q-def} can be represented as
\begin{equation}\label{eq:q_represent_discrete}
\mathbf{q}_{t}(\boldsymbol{\theta})
=
\overline{\mathbf{E}}_{t}\,\bigl(\mathbf{a}(\boldsymbol{\theta})\otimes\overline{\mathbf{b}}(\boldsymbol{\theta})\bigr)
=
\overline{\mathbf{E}}_{t}\,\overline{\mathbf{c}}(\boldsymbol{\theta}),
\end{equation}
where 
$
\overline{\mathbf{c}}(\boldsymbol{\theta})
=
\mathbf{a}(\boldsymbol{\theta})\otimes\overline{\mathbf{b}}(\boldsymbol{\theta})
\;\in\;\mathbb{C}^{MS}
$.
By inserting \eqref{eq:q_represent_discrete} into \eqref{eq:ch-def}, then substituting the result into \eqref{eq:y-def} and collecting the $N_s$ symbols over the $t$-th time frame, we obtain
\begin{equation}\label{eq:y-def-simp-discrete}
\mathbf{y}_{t}
=
\sum_{i=0}^{I}
\sqrt{P}\,\alpha_{i}\,
\mathbf{d}(\tau_{i})\,
\overline{\mathbf{c}}(\boldsymbol{\theta}_{i})^{\mathsf{T}}\,
\overline{\mathbf{E}}_{t}^{\mathsf{T}}\,
\mathbf{f}_{t}
\;+\;
\mathbf{v}_{t}.
\end{equation}

\begin{remark} \textbf{Pattern generation vs.\ state selection:}  
Both the synthesis and finite‐state selection models lead to analogous compact signal representations in \eqref{eq:y-def-simp} and \eqref{eq:y-def-simp-discrete}, respectively, differing only in their physical interpretation and underlying constraints. 
The synthesis model uses continuous coefficients $\{\mathbf{E}_t\}_{t=1}^{N_t}$ to generate beampatterns, whereas the finite-state selection model uses binary selection matrices $\{\overline{\mathbf{E}}_t\}_{t=1}^{N_t}$ to switch among a finite library of predefined patterns.
\end{remark}

\subsection{Fisher Information Analysis}\label{sec:fim}
In this section, we derive the \ac{fim} for the 3D \ac{ue} position and obtain the corresponding \ac{crb}. 
These bounds serve as the objective for the joint design of the \ac{bb} and \ac{em} precoders. 
Owing to the parallel structure of the synthesis and finite‐state selection models introduced in Sec.~\ref{sec:reconfigurability_models}, we present the derivation for the synthesis case only. 
The results for the finite-state selection model can be readily obtained by replacing  
$\mathbf{c}(\boldsymbol{\theta}_{i})$ with $\overline{\mathbf{c}}(\boldsymbol{\theta}_{i})$  
and  
$\mathbf{E}_{t}$ with $\overline{\mathbf{E}}_{t}$.  
\subsubsection{FIM in Channel Domain}\label{sec:fim_ch_domain}
Similar to \cite{fadakar2024multi, fascista2022ris}, and due to the considerable path loss associated with \ac{nlos} components as well as the unknown number of such paths, we treat the \ac{nlos} paths as interference.
Thus, only considering the \ac{los} components with index $0$ in \eqref{eq:y-def-simp}, we define the channel parameter vector
$\boldsymbol{\gamma}\in\mathbb{R}^{5}$ 
as:
\begin{equation}
\boldsymbol{\gamma}
=
[\theta^{\text{el}},\theta^{\text{az}},\tau,\rho,\varphi]^\mathsf{T},
\end{equation}
where the index $0$ are dropped for notational simplicity.
Then, the $(i,j)$-th element of the \ac{fim} $\mathbf{J}_{\boldsymbol{\gamma}}\in\mathbb{C}^{5\times 5}$ can be obtained as \cite{fadakar2025near,fascista2022ris,Zhang2025Joint}:
\begin{equation}\label{eq:fim_def}
[\mathbf{J}_{\boldsymbol{\gamma}}]_{i,j}
=
\frac{2}{\sigma^2_v}
\sum_{t=1}^{N_t}
\Re
\bigg\{
\left(
\frac{\partial \mathbf{x}_t}{\partial \gamma_{i}}
\right)^\mathsf{H}
\left(
\frac{\partial \mathbf{x}_t}{\partial \gamma_{j}}
\right)
\bigg\}
,
\end{equation}
where $\mathbf{x}_t=\sqrt{P}
\alpha\,
\mathbf{d}(\tau)
\mathbf{c}(\boldsymbol{\theta})^\mathsf{T}\mathbf{E}_t^\mathsf{T}
\mathbf{f}_{t}$ denotes the multipath- and noise-free version of \eqref{eq:y-def-simp} at the position $\mathbf{p}_{u}$. 
\subsubsection{FIM in Position Domain}
To compute the \ac{fim} $\mathbf{J}_{\boldsymbol{\eta}} \in \mathbb{R}^{5 \times 5}$ corresponding to the state-space parameter vector 
\begin{equation}\label{eq:gamma}
\boldsymbol{\eta} = \bigl[\mathbf{p}_\text{u}^\mathsf{T},\, \rho,\, \varphi \bigr]^\mathsf{T},   
\end{equation}
we employ a Jacobian-based transformation. 
Specifically, the Jacobian matrix 
$\mathbf{T} = \frac{\partial \boldsymbol{\gamma}^\mathsf{T}}{\partial \boldsymbol{\eta}} \in \mathbb{R}^{5 \times 5}$ 
is used to map the \ac{fim} from the channel parameter domain to the state domain via the relation
$
\mathbf{J}_{\boldsymbol{\eta}} = \mathbf{T} \mathbf{J}_{\boldsymbol{\gamma}} \mathbf{T}^\mathsf{T}.
$
Based on this transformed \ac{fim}, the \ac{peb} at the location $\mathbf{p}_\text{u}$ is given by:
\begin{equation}\label{eq:peb_def}
\text{PEB}
=
\mathrm{tr}([\mathbf{J}_{\boldsymbol{\eta}}^{-1}]_{1:3,1:3}).
\end{equation}

\section{Hybrid Codebook Design: Synthesis Model}\label{sec:synthesis_optimization}
In this section, first we formulate the hybrid beamforming design optimization problem using the synthesis \ac{er-fas} model explained in Sec.~\ref {sec:synthesis}. 
Then, we design a codebook containing $N_t$ beams. 

\subsection{Optimal Structure of Precoders}\label{sec:opt_structure_synthesis}
Assuming a \emph{perfect knowledge} of $\boldsymbol{\eta}$ in \eqref{eq:gamma}, we formulate the optimization problem to jointly design the \ac{em} precoders $\{\mathbf{E}_t\}_{t=1}^{N_t}$ and \ac{bb} precoders $\{\mathbf{f}_t\}_{t=1}^{N_t}$ to minimize the \ac{peb}.
To this end, we formulate the problem as follows:
\begin{subequations} \label{optimization_problem}
\begin{align}
\underset{
\{\mathbf{E}_t,\mathbf{f}_t\}_{t=1}^{N_t}
}
{\textnormal{min}}
\quad  
&
\mathrm{PEB}
\left(
\{\mathbf{E}_t\}_{t=1}^{N_t},
\{\mathbf{f}_t\}_{t=1}^{N_t};\ 
\boldsymbol{\eta}
\right)
\label{def:opt-prob}
\\
\textnormal{s.t.} \quad
&
\lVert [\mathbf{E}_t]_{m,[(m-1)Q+1:mQ]}\rVert^2=1, \label{opt:em_const}
\\
&
\sum_{t=1}^{N_t}\mathbf{f}_t^\mathsf{H}\mathbf{f}_t=1, \label{opt:pow_const}
\\
&
m=1,\dots, M,\ 
t=1,\dots, N_t, \notag
\label{opt-const1}
\end{align}
\end{subequations}
where the constraint \eqref{opt:em_const} is equivalent to the constraint $\lVert \mathbf{e}_{m,t}\rVert^2=1$ discussed in Sec.~\ref{sec:synthesis}. 
The constraint \eqref{opt:pow_const} ensures that $P$ in \eqref{eq:y-def-simp} is the total transmitted power across all transmissions.
The optimization problem \eqref{optimization_problem} is high dimensional due to coupled \ac{em} and \ac{bb} precoders. 
To make \eqref{optimization_problem} more tractable and simpler, we define the vectors $\{\mathbf{w}_t\}_{t=1}^{N_t}$ as follows:
\begin{equation}\label{def:w}
\mathbf{w}_t
=
\mathbf{E}_t^\mathsf{T}
\mathbf{f}_{t}
\in 
\mathbb{C}^{MQ}.
\end{equation}

With this definition, to tackle \eqref{optimization_problem}, we first find the \emph{low-dimensional structure} of the vectors $\{\mathbf{w}_t\}_{t=1}^{N_t}$, and then, we will obtain the optimal structures for the \ac{em} and \ac{bs} precoders. 
First, we state the following lemma:
\begin{lemma}\label{lemma:W_affine}
The \ac{peb} defined in \eqref{eq:peb_def} is a convex function with respect to the matrix $\mathbf{W}=\sum_{t=1}^{N_t}\mathbf{W}_t$ where $\mathbf{W}_t=\mathbf{w}_t\mathbf{w}_t^\mathsf{H}$.
\end{lemma}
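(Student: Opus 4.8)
The plan is to prove the claim in two stages: first show that the position-domain FIM $\mathbf{J}_{\boldsymbol{\eta}}$ depends \emph{affinely} (in fact linearly) on $\mathbf{W}$, and then show that the map $\mathbf{J}_{\boldsymbol{\eta}}\mapsto\mathrm{tr}([\mathbf{J}_{\boldsymbol{\eta}}^{-1}]_{1:3,1:3})$ is convex on the positive-definite cone. Convexity of the PEB in $\mathbf{W}$ then follows since the composition of a convex function with an affine map is convex, and the domain $\{\mathbf{W}:\mathbf{J}_{\boldsymbol{\eta}}(\mathbf{W})\succ\mathbf{0}\}$ on which the PEB is finite is itself convex (preimage of the convex PD cone under an affine map).

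For the first stage, I would use the definition $\mathbf{w}_t=\mathbf{E}_t^\mathsf{T}\mathbf{f}_t$ from \eqref{def:w} to write the noise-free signal as $\mathbf{x}_t=\sqrt{P}\,\alpha\,\mathbf{d}(\tau)\,(\mathbf{c}(\boldsymbol{\theta})^\mathsf{T}\mathbf{w}_t)$, which is \emph{linear} in $\mathbf{w}_t$. Differentiating with respect to each channel parameter $\gamma_i$ then produces a separable form $\partial\mathbf{x}_t/\partial\gamma_i=\mathbf{u}_i\,(\mathbf{r}_i^\mathsf{T}\mathbf{w}_t)$, where $\mathbf{u}_i\in\mathbb{C}^{N_s}$ collects the delay-domain factor ($\mathbf{d}(\tau)$ or its derivative) and $\mathbf{r}_i\in\mathbb{C}^{MQ}$ collects the angle/gain factor ($\mathbf{c}(\boldsymbol{\theta})$ or its derivative, scaled by $\alpha$ or its derivatives in $\rho,\varphi$). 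Both $\mathbf{u}_i$ and $\mathbf{r}_i$ depend only on the true parameters, not on $\mathbf{w}_t$ or $t$. Substituting into \eqref{eq:fim_def} and using $(\mathbf{r}_i^\mathsf{T}\mathbf{w}_t)^{\conj}(\mathbf{r}_j^\mathsf{T}\mathbf{w}_t)=\mathrm{tr}(\mathbf{r}_i^{\conj}\mathbf{r}_j^\mathsf{T}\mathbf{W}_t)$ together with $\sum_t\mathbf{W}_t=\mathbf{W}$ yields
\begin{equation*}
[\mathbf{J}_{\boldsymbol{\gamma}}]_{i,j}
=
\frac{2}{\sigma_v^2}\sum_{t=1}^{N_t}
\Re\left\{(\mathbf{u}_i^\mathsf{H}\mathbf{u}_j)\,
\mathrm{tr}\bigl(\mathbf{r}_i^{\conj}\mathbf{r}_j^\mathsf{T}\mathbf{W}_t\bigr)\right\}
=
\frac{2}{\sigma_v^2}
\Re\left\{(\mathbf{u}_i^\mathsf{H}\mathbf{u}_j)\,
\mathrm{tr}\bigl(\mathbf{r}_i^{\conj}\mathbf{r}_j^\mathsf{T}\mathbf{W}\bigr)\right\}.
\end{equation*}
Hence every entry of $\mathbf{J}_{\boldsymbol{\gamma}}$, and therefore of $\mathbf{J}_{\boldsymbol{\eta}}=\mathbf{T}\mathbf{J}_{\boldsymbol{\gamma}}\mathbf{T}^\mathsf{T}$ (as $\mathbf{T}$ is independent of $\mathbf{W}$), is linear in $\mathbf{W}$.

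For the second stage, I would write $[\mathbf{J}_{\boldsymbol{\eta}}^{-1}]_{1:3,1:3}=\mathbf{S}^\mathsf{T}\mathbf{J}_{\boldsymbol{\eta}}^{-1}\mathbf{S}$ with $\mathbf{S}=[\mathbf{e}_1,\mathbf{e}_2,\mathbf{e}_3]$ the selection matrix, and invoke the standard matrix-convexity of the inverse: for $\mathbf{J}_1,\mathbf{J}_2\succ\mathbf{0}$ and $\lambda\in[0,1]$, $(\lambda\mathbf{J}_1+(1-\lambda)\mathbf{J}_2)^{-1}\preceq\lambda\mathbf{J}_1^{-1}+(1-\lambda)\mathbf{J}_2^{-1}$. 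Congruence by $\mathbf{S}$ preserves this Loewner ordering and the trace is order-preserving, so $\mathbf{J}\mapsto\mathrm{tr}(\mathbf{S}^\mathsf{T}\mathbf{J}^{-1}\mathbf{S})$ is convex in its positive-definite argument. Composing with the affine map established in the first stage gives convexity of the PEB \eqref{eq:peb_def} in $\mathbf{W}$.

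The main obstacle is the bookkeeping in the first stage: carefully verifying that \emph{every} partial derivative factors into the separable form $\mathbf{u}_i(\mathbf{r}_i^\mathsf{T}\mathbf{w}_t)$, in particular handling the derivatives with respect to $\rho$ and $\varphi$ (which act on the scalar $\alpha$ while leaving the $\mathbf{w}_t$-dependence linear) and confirming that the products of derivatives collapse exactly to traces against the rank-one pieces $\mathbf{W}_t$. Once that separability is confirmed, linearity of $\mathbf{J}_{\boldsymbol{\eta}}$ in $\mathbf{W}$ is immediate, and the convexity conclusion reduces to citing well-known matrix-analytic facts.
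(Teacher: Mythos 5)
Your proof is correct and follows essentially the same route as the paper's: establish that the FIM is affine (linear) in $\mathbf{W}$, then compose with the convex map $\mathbf{J}\mapsto\mathrm{tr}\bigl([\mathbf{J}^{-1}]_{1:3,1:3}\bigr)$ on the positive-definite cone. The only difference is completeness: where the paper writes out the first row of $\mathbf{J}_{\boldsymbol{\gamma}}$ explicitly and simply cites the composition rule of Boyd, you derive the linearity via a generic separable factorization $\partial\mathbf{x}_t/\partial\gamma_i=\mathbf{u}_i(\mathbf{r}_i^\mathsf{T}\mathbf{w}_t)$ and supply the operator-convexity-of-the-inverse, congruence, and trace-monotonicity steps that the paper leaves implicit.
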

\begin{proof}
Assuming 
$\mathbf{W}=\sum_{t=1}^{N_t}\mathbf{W}_t$, the entries of the first row of the \ac{fim} $\mathbf{J}_{\boldsymbol{\gamma}}$ in \eqref{eq:fim_def} can be obtained as:
\begin{align*}
& 
A
\Re
\{
\mathbf{c}^{(2)}(\boldsymbol{\theta})^\mathsf{T}
\mathbf{W}
\mathbf{c}^{(2)}(\boldsymbol{\theta})^{*}
\},\ 
A
\Re
\{
\mathbf{c}^{(3)}(\boldsymbol{\theta})^\mathsf{T}
\mathbf{W}
\mathbf{c}^{(2)}(\boldsymbol{\theta})^{*}
\},\\
& 
\frac{A\dot{N}_s}{N_s}
\Re
\{
\mathbf{c}^{(1)}(\boldsymbol{\theta})^\mathsf{T}
\mathbf{W}
\mathbf{c}^{(2)}(\boldsymbol{\theta})^{*}
\},\ 
\frac{A}{\rho}
\Re
\{
\mathbf{c}^{(1)}(\boldsymbol{\theta})^\mathsf{T}
\mathbf{W}
\mathbf{c}^{(2)}(\boldsymbol{\theta})^{*}
\},\\
&
A
\Re
\{
j
\mathbf{c}^{(1)}(\boldsymbol{\theta})^\mathsf{T}
\mathbf{W}
\mathbf{c}^{(2)}(\boldsymbol{\theta})^{*}
\},
\end{align*}
where $A=\frac{2PN_s\rho^2}{\sigma^2_v}, \dot{N}_s=\mathbf{d}(\tau)^\mathsf{H}\frac{\partial\mathbf{d}(\tau)}{\partial\tau}$, and:
\begin{equation}\label{def:c_partials}
\mathbf{c}^{(1)}(\boldsymbol{\theta})
=
\mathbf{c}(\boldsymbol{\theta})
,\ 
\mathbf{c}^{(2)}(\boldsymbol{\theta})
=
\frac{
\partial\mathbf{c}(\boldsymbol{\theta})
}{\partial\theta^{\text{el}}}
,\ 
\mathbf{c}^{(3)}(\boldsymbol{\theta})
=
\frac{
\partial\mathbf{c}(\boldsymbol{\theta})}{\partial\theta^{\text{az}}}.
\end{equation}

The elements in other rows can be obtained in a similar manner. 
Hence, the elements of the \ac{fim} $\mathbf{J}_{\boldsymbol{\gamma}}$ defined in \eqref{eq:fim_def} linearly depend on $\mathbf{W}$. 
Thus, according to the composition rule \cite{boyd2004convex}, the \ac{peb} in \eqref{eq:peb_def} is a convex function of this matrix.
\end{proof}

Thus, to transform the non-convex optimization problem \eqref{optimization_problem} into a convex one, first we represent it in terms of $\mathbf{W}$, and assume that there exists feasible \ac{bs} and \ac{em} precoders $\{\mathbf{E}_t,\mathbf{f}_t\}_{t=1}^{N_t}$ for the optimal matrix $\mathbf{W}$. 
Hence, with this assumption, we drop the constraints \eqref{opt:em_const}, and \eqref{opt:pow_const}, and reformulate \eqref{optimization_problem} as follows:
\begin{subequations} \label{opt_prob_represent}
\begin{align}
\underset{
\mathbf{W}
}
{\textnormal{min}}
\quad  
&
\mathrm{PEB}
\left(
\mathbf{W};
\boldsymbol{\eta}
\right)
\label{def:opt-prob-represent}
\\
\textnormal{s.t.} \quad
&
\mathrm{tr}\left(
\mathbf{W}
\right)=1
, \label{opt:pow_const_w}
\\
& 
\mathrm{rank}(\mathbf{W})\le N_t \,,
\label{opt:rank_1_constraints}
\end{align}
\end{subequations}
where \eqref{opt:pow_const_w} is the equivalent power constraint to \eqref{opt:pow_const} 
\footnote{
Since $\lVert \mathbf{e}_{m,t}\rVert^2=1$, we have 
$
\mathrm{tr}(\mathbf{W}_t)=\mathbf{w}_t^\mathsf{H}\mathbf{w}_t
=
\mathbf{f}_t^\mathsf{H}\mathbf{f}_t
$. 
Combining this with \eqref{opt:pow_const}, we obtain $\mathrm{tr}(\mathbf{W})=\sum_{t=1}^{N_t}\mathrm{tr}(\mathbf{W}_t)= \sum_{t=1}^{N_t}
\mathbf{f}_t^\mathsf{H}\mathbf{f}_t=1$.
}.
Then, we relax the problem by dropping the rank constraint in \eqref{opt:rank_1_constraints} to obtain a convex optimization problem. 
To achieve low-complexity optimization, we use the low-dimensional structure of the optimal matrix $\mathbf{W}$ in the following proposition.
\begin{proposition}\label{prop:opt_structure}
The optimal matrix $\mathbf{W}$ can be represented as:
\begin{equation}\label{eq:w_low_dim_structure}
\mathbf{W}
=
\mathbf{C}_w
\mathbf{\Xi}
\mathbf{C}_w^\mathsf{H}
\end{equation}
where $\mathbf{\Xi}\in\mathbb{C}^{3\times 3}$ is a \ac{psd} matrix, and
$
\mathbf{C}_w
=
[
\mathbf{c}^{(1)}(\boldsymbol{\theta})^{*},
\mathbf{c}^{(2)}(\boldsymbol{\theta})^{*},
\mathbf{c}^{(3)}(\boldsymbol{\theta})^{*}
]\in \mathbb{C}^{MQ\times 3}
$.
\end{proposition}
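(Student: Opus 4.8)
The plan is to exploit the structural fact, established in Lemma~\ref{lemma:W_affine}, that every entry of $\mathbf{J}_{\boldsymbol{\gamma}}$ — and hence the \ac{peb} — depends on $\mathbf{W}$ only through the bilinear forms $\mathbf{c}^{(a)}(\boldsymbol{\theta})^\mathsf{T}\mathbf{W}\,\mathbf{c}^{(b)}(\boldsymbol{\theta})^{*}$ with $a,b\in\{1,2,3\}$. These nine scalars are precisely the entries of the $3\times 3$ matrix $\mathbf{C}_w^\mathsf{H}\mathbf{W}\,\mathbf{C}_w$, since the $a$-th column of $\mathbf{C}_w$ is $\mathbf{c}^{(a)}(\boldsymbol{\theta})^{*}$. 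I would therefore write the objective as $\mathrm{PEB}(\mathbf{W})=\Phi\!\left(\mathbf{C}_w^\mathsf{H}\mathbf{W}\,\mathbf{C}_w\right)$ for a suitable $\Phi$ acting on this compressed matrix, so that any two feasible $\mathbf{W}$ inducing the same $\mathbf{C}_w^\mathsf{H}\mathbf{W}\,\mathbf{C}_w$ attain the same \ac{peb}.

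Next I would take any optimal $\mathbf{W}^\star\succeq\mathbf{0}$ of the relaxed problem~\eqref{opt_prob_represent} and project it onto the column space of $\mathbf{C}_w$. Using the projector $\projrange{\mathbf{C}_w}=\mathbf{C}_w(\mathbf{C}_w^\mathsf{H}\mathbf{C}_w)^{-1}\mathbf{C}_w^\mathsf{H}$, I define $\widetilde{\mathbf{W}}=\projrange{\mathbf{C}_w}\mathbf{W}^\star\projrange{\mathbf{C}_w}$. Expanding the projector gives $\widetilde{\mathbf{W}}=\mathbf{C}_w\mathbf{\Xi}\,\mathbf{C}_w^\mathsf{H}$ with $\mathbf{\Xi}=(\mathbf{C}_w^\mathsf{H}\mathbf{C}_w)^{-1}(\mathbf{C}_w^\mathsf{H}\mathbf{W}^\star\mathbf{C}_w)(\mathbf{C}_w^\mathsf{H}\mathbf{C}_w)^{-1}\in\mathbb{C}^{3\times 3}$, which is \ac{psd}: writing $\mathbf{N}=\mathbf{C}_w(\mathbf{C}_w^\mathsf{H}\mathbf{C}_w)^{-1}$ yields $\mathbf{\Xi}=\mathbf{N}^\mathsf{H}\mathbf{W}^\star\mathbf{N}\succeq\mathbf{0}$. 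This already produces the claimed form~\eqref{eq:w_low_dim_structure}; it then remains to verify that projection neither changes the objective nor breaks feasibility.

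For objective invariance I would use $\projrange{\mathbf{C}_w}\mathbf{C}_w=\mathbf{C}_w$ together with the Hermitian symmetry of the projector to get $\mathbf{C}_w^\mathsf{H}\widetilde{\mathbf{W}}\,\mathbf{C}_w=\mathbf{C}_w^\mathsf{H}\mathbf{W}^\star\mathbf{C}_w$, whence $\mathrm{PEB}(\widetilde{\mathbf{W}})=\mathrm{PEB}(\mathbf{W}^\star)$ by the first paragraph. For feasibility, since $\mathbf{0}\preceq\projrange{\mathbf{C}_w}\preceq\mathbf{I}$ and $\mathbf{W}^\star\succeq\mathbf{0}$, I have $\mathrm{tr}(\widetilde{\mathbf{W}})=\mathrm{tr}(\projrange{\mathbf{C}_w}\mathbf{W}^\star)\le\mathrm{tr}(\mathbf{W}^\star)=1$. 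If this is strict, I rescale $\widehat{\mathbf{W}}=\widetilde{\mathbf{W}}/\mathrm{tr}(\widetilde{\mathbf{W}})$; because $\mathbf{J}_{\boldsymbol{\gamma}}$ (and hence $\mathbf{J}_{\boldsymbol{\eta}}=\mathbf{T}\mathbf{J}_{\boldsymbol{\gamma}}\mathbf{T}^\mathsf{T}$) is linear in $\mathbf{W}$ by Lemma~\ref{lemma:W_affine}, scaling $\mathbf{W}$ by $c>1$ scales $\mathbf{J}_{\boldsymbol{\eta}}$ by $c$ and divides the \ac{peb} by $c$, so the rescaling only lowers the objective while restoring $\mathrm{tr}(\widehat{\mathbf{W}})=1$. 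The rescaled matrix keeps the form $\mathbf{C}_w\mathbf{\Xi}\mathbf{C}_w^\mathsf{H}$ with $\mathbf{\Xi}$ replaced by $\mathbf{\Xi}/\mathrm{tr}(\widetilde{\mathbf{W}})\succeq\mathbf{0}$, establishing the existence of an optimizer of the asserted structure.

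The main obstacle is pinning down objective invariance rigorously, i.e. confirming that \emph{all} $25$ entries of $\mathbf{J}_{\boldsymbol{\gamma}}$ genuinely factor through $\mathbf{C}_w^\mathsf{H}\mathbf{W}\mathbf{C}_w$ and not only the first row displayed in Lemma~\ref{lemma:W_affine}; this follows because each parameter derivative injects one of $\{\mathbf{c}^{(1)},\mathbf{c}^{(2)},\mathbf{c}^{(3)}\}$ into both slots, with the delay/amplitude/phase dependence absorbed into $\mathbf{W}$-independent scalar prefactors. A secondary technical point is the well-definedness of $\projrange{\mathbf{C}_w}$, which requires $\{\mathbf{c}^{(1)},\mathbf{c}^{(2)},\mathbf{c}^{(3)}\}$ to be linearly independent so that $\mathbf{C}_w^\mathsf{H}\mathbf{C}_w$ is invertible; this holds generically, and otherwise one substitutes a pseudoinverse without altering the final form. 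As a by-product the representation forces $\mathrm{rank}(\mathbf{W})\le 3$, so for $N_t\ge 3$ the dropped rank constraint~\eqref{opt:rank_1_constraints} is automatically satisfied, which is exactly the formal basis for the ``only three beams are required'' conclusion.
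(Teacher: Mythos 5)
Your proof is correct and takes essentially the same route as the paper's own proof in Appendix~\ref{app:opt_structure_proof}: both rest on Lemma~\ref{lemma:W_affine} (the \ac{peb} depends on $\mathbf{W}$ only through $\mathbf{C}_w^\mathsf{H}\mathbf{W}\mathbf{C}_w$), both use the orthogonal projector onto the column space of $\mathbf{C}_w$, and both argue that the component of $\mathbf{W}$ outside that subspace merely wastes transmit power that can be reallocated to strictly lower the \ac{peb}. The only cosmetic difference is that the paper factorizes $\mathbf{W}=\mathbf{Q}\mathbf{Q}^\mathsf{H}$ and splits $\mathbf{Q}$ into $\boldsymbol{\Pi}_{\mathbf{C}_w}\mathbf{Q}+\boldsymbol{\Pi}_{\mathbf{C}_w}^{\perp}\mathbf{Q}$ before showing the residual has zero trace at the optimum, whereas you project $\mathbf{W}^{\star}$ directly on both sides and restore the trace constraint by rescaling; the two arguments are logically equivalent.
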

\begin{proof}
See Appendix~\ref{app:opt_structure_proof}.
\end{proof}

\begin{proposition}\label{prop:opt_precoders}
Based on the optimal low-dimensional structure of the matrix $\mathbf{W}=\sum_{t=1}^{N_t}\mathbf{W}_t$ in Prop.~\ref{prop:opt_structure}, we propose to use the following $N_t=3$ codewords $\{\mathbf{w}_i\}_{i=1}^{3}$ to achieve the approximate optimal value of \eqref{opt_prob_represent}:
\begin{equation}\label{eq:w_opt_synthesis}
\widehat{\mathbf{w}}_i=
\sqrt{\delta_i}
\frac{\mathbf{c}^{(i)}(\boldsymbol{\theta})^{*}}{\lVert \mathbf{c}^{(i)}(\boldsymbol{\theta})\rVert},
\end{equation}
where $1\ge \delta_i\ge 0$ is the portion of power dedicated to the $i$-th codeword which should satisfy the inequality $\delta_1+\delta_2+\delta_3\le 1$.
For $i\in \{1,2,3\}$, the corresponding \ac{bb} and \ac{em} precoders can be obtained as:
\begin{align}\label{optimal_EM_f_synthesis}
[\hat{\mathbf{f}}_{i}]_m
=
\sqrt{\delta_i}
\frac{\lVert\mathbf{c}_{m}^{(i)}(\boldsymbol{\theta})\rVert}{\lVert \mathbf{c}^{(i)}(\boldsymbol{\theta})\rVert}
e^{j \psi_{m,i}}
,\ 
\hat{\mathbf{e}}_{m,i}
=
\frac{\mathbf{c}_{m}^{(i)}(\boldsymbol{\theta})}{\lVert \mathbf{c}_{m}^{(i)}(\boldsymbol{\theta})\rVert}
e^{-j \psi_{m,i}},
\end{align}
where $\mathbf{c}^{(i)}_m(\boldsymbol{\theta})=[\mathbf{c}^{(i)}(\boldsymbol{\theta})]_{((m-1)Q+1):mQ}$, and $\{\psi_{m,i}\}_{m=1,i=1}^{M,3}$ are arbitrary phases.
\end{proposition}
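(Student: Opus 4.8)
The plan is to build the three proposed codewords directly on the optimal subspace identified in Prop.~\ref{prop:opt_structure} and then exhibit an explicit, constraint-feasible factorization of each codeword into a \ac{bb}/\ac{em} precoder pair. First I would recall from Prop.~\ref{prop:opt_structure} that any optimal $\mathbf{W}$ lies in the range of $\mathbf{C}_w=[\mathbf{c}^{(1)}(\boldsymbol{\theta})^{*},\mathbf{c}^{(2)}(\boldsymbol{\theta})^{*},\mathbf{c}^{(3)}(\boldsymbol{\theta})^{*}]$, i.e., $\mathbf{W}=\mathbf{C}_w\boldsymbol{\Xi}\mathbf{C}_w^{\mathsf{H}}$ with $\boldsymbol{\Xi}\succeq\mathbf{0}$, and that (through the \ac{fim} entries computed in the proof of Lemma~\ref{lemma:W_affine}) the \ac{peb} depends on $\mathbf{W}$ only via the bilinear forms $\mathbf{c}^{(a)}(\boldsymbol{\theta})^{\mathsf{T}}\mathbf{W}\,\mathbf{c}^{(b)}(\boldsymbol{\theta})^{*}$ for $a,b\in\{1,2,3\}$. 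To obtain a closed-form, low-complexity design I would restrict $\boldsymbol{\Xi}$ to be \emph{diagonal}, which amounts to aligning each of the $N_t=3$ beams with a single generating direction $\mathbf{c}^{(i)}(\boldsymbol{\theta})^{*}$; this is exactly the choice in \eqref{eq:w_opt_synthesis}, and it yields $\mathbf{W}=\sum_{i=1}^{3}\widehat{\mathbf{w}}_i\widehat{\mathbf{w}}_i^{\mathsf{H}}=\mathbf{C}_w\,\mathrm{diag}(\delta_1/\lVert\mathbf{c}^{(1)}\rVert^2,\delta_2/\lVert\mathbf{c}^{(2)}\rVert^2,\delta_3/\lVert\mathbf{c}^{(3)}\rVert^2)\,\mathbf{C}_w^{\mathsf{H}}$, a feasible point of the relaxed convex problem obtained from \eqref{opt_prob_represent} whose rank is at most $3=N_t$. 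Because the diagonal restriction is a strict subfamily of the PSD cone, this recovers the \emph{approximate} (rather than exact) optimum, with the residual degrees of freedom collapsed onto the three scalars $\delta_i$.

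Next I would establish the precoder factorization by unpacking the block structure of \eqref{def:w}. Since $\mathbf{E}_t=\mathrm{blkdiag}\{\mathbf{e}_{1,t}^{\mathsf{H}},\dots,\mathbf{e}_{M,t}^{\mathsf{H}}\}$, the $m$-th length-$Q$ block of $\mathbf{w}_t=\mathbf{E}_t^{\mathsf{T}}\mathbf{f}_t$ equals $[\mathbf{f}_t]_m\,\mathbf{e}_{m,t}^{*}$. Matching this to the $m$-th block $\sqrt{\delta_i}\,\mathbf{c}_m^{(i)}(\boldsymbol{\theta})^{*}/\lVert\mathbf{c}^{(i)}(\boldsymbol{\theta})\rVert$ of $\widehat{\mathbf{w}}_i$ and imposing the unit-norm \ac{em} constraint \eqref{opt:em_const} forces $\lvert[\mathbf{f}_i]_m\rvert=\sqrt{\delta_i}\,\lVert\mathbf{c}_m^{(i)}(\boldsymbol{\theta})\rVert/\lVert\mathbf{c}^{(i)}(\boldsymbol{\theta})\rVert$ and $\mathbf{e}_{m,i}\propto\mathbf{c}_m^{(i)}(\boldsymbol{\theta})/\lVert\mathbf{c}_m^{(i)}(\boldsymbol{\theta})\rVert$. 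A free per-block phase $\psi_{m,i}$ can then be assigned to $[\mathbf{f}_i]_m$ and absorbed into $\mathbf{e}_{m,i}$ without altering their product $[\mathbf{f}_i]_m\mathbf{e}_{m,i}^{*}$, and hence without altering $\widehat{\mathbf{w}}_i$; this gauge freedom reproduces \eqref{optimal_EM_f_synthesis} and simultaneously guarantees $\lVert\mathbf{e}_{m,i}\rVert=1$ by construction.

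Finally I would verify the global power budget \eqref{opt:pow_const}. Using the block partition identity $\sum_{m=1}^{M}\lVert\mathbf{c}_m^{(i)}(\boldsymbol{\theta})\rVert^2=\lVert\mathbf{c}^{(i)}(\boldsymbol{\theta})\rVert^2$, a direct computation gives $\widehat{\mathbf{f}}_i^{\mathsf{H}}\widehat{\mathbf{f}}_i=\delta_i$, so that $\sum_{i=1}^{3}\widehat{\mathbf{f}}_i^{\mathsf{H}}\widehat{\mathbf{f}}_i=\sum_{i=1}^{3}\delta_i$ and likewise $\mathrm{tr}(\mathbf{W})=\sum_{i=1}^{3}\delta_i$; both are bounded by one precisely under the stated condition $\delta_1+\delta_2+\delta_3\le1$, so \eqref{opt:pow_const} and \eqref{opt:pow_const_w} hold (with equality at the optimum, since increasing power can only lower the \ac{peb}). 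The main obstacle is not the algebra of this factorization, which is routine, but the conceptual justification of the diagonal-$\boldsymbol{\Xi}$ reduction: one must argue that collapsing the full PSD structure of $\boldsymbol{\Xi}$ onto the three power-split variables $\delta_i$ incurs only a controlled loss, so that optimizing the low-dimensional objective $\mathrm{PEB}(\delta_1,\delta_2,\delta_3)$ over the simplex $\{\sum_i\delta_i\le1,\ \delta_i\ge0\}$ indeed attains the claimed near-optimal value of \eqref{opt_prob_represent}.
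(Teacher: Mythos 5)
Your proposal matches the paper's own proof essentially step for step: it invokes the structure $\mathbf{W}=\mathbf{C}_w\boldsymbol{\Xi}\mathbf{C}_w^{\mathsf{H}}$ from Prop.~\ref{prop:opt_structure}, restricts $\boldsymbol{\Xi}$ to a diagonal matrix $\mathrm{diag}(\boldsymbol{\xi})$ with $[\boldsymbol{\xi}]_i=\delta_i/\lVert\mathbf{c}^{(i)}(\boldsymbol{\theta})\rVert^2$ to obtain the three codewords in \eqref{eq:w_opt_synthesis}, and then factors each codeword through the block structure of \eqref{def:w} into the BB/EM precoder pair of \eqref{optimal_EM_f_synthesis} with the per-antenna phase ambiguity. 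The only differences are cosmetic and slightly in your favor: you track the conjugation in the blocks of $\mathbf{E}_i^{\mathsf{T}}\mathbf{f}_i$ more carefully than the paper's proof does, and you verify the power constraints \eqref{opt:pow_const} and \eqref{opt:pow_const_w} explicitly, which the paper handles in a footnote rather than inside the proof.
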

\begin{proof}
See Appendix~\ref{app:opt_precoders}.
\end{proof}

\subsection{Proposed Codebook}\label{sec:codebook_synthesis}
In Sec.~\ref{sec:opt_structure_synthesis}, we obtained a low-complexity joint \ac{bb} and \ac{em} codebook design based on the optimal solutions of \eqref{optimization_problem}, that minimizes the \ac{peb} under perfect knowledge of state parameters including \ac{ue} position. However, in practice the \ac{ue} position may not be available due to measurement errors. 
Let $\bm{\mathcal{U}}$ denote a 3D region accounting for the \ac{ue} position uncertainty region. 
To obtain a robust design under \ac{ue} position uncertainty, assuming $\{\boldsymbol{\theta}_\ell\}_{\ell=1}^{L}$ are uniformly spaced \ac{aod}s covering the uncertainty region, based on Prop.~\ref{prop:opt_precoders}, we propose the following codebook consisting of $N_t=3L$ codewords for the \ac{bb} and \ac{em} precoders:
\begin{align}
\bm{\mathcal{F}}
& =
\bigg\{
\sqrt{\delta_{3\ell -2}} 
\hat{\mathbf{f}}_1(\boldsymbol{\theta}_\ell),
\sqrt{\delta_{3\ell -1}} 
\hat{\mathbf{f}}_2(\boldsymbol{\theta}_\ell),
\sqrt{\delta_{3\ell}} 
\hat{\mathbf{f}}_3(\boldsymbol{\theta}_\ell)
\bigg\}_{\ell=1}^{L},
\notag \\ \label{def:synthesis_codebook}
\bm{\mathcal{E}}
& =
\bigg\{
\widehat{\mathbf{E}}_{1}(\boldsymbol{\theta}_\ell),
\widehat{\mathbf{E}}_{2}(\boldsymbol{\theta}_\ell),
\widehat{\mathbf{E}}_{3}(\boldsymbol{\theta}_\ell)
\bigg\}_{\ell=1}^{L},
\end{align}
where $\widehat{\mathbf{E}}_{i}$ is obtained using \eqref{def:E_synthesis}.
The equivalent representation of this codebook based on vectors $\widehat{\mathbf{w}}_i$ in \eqref{eq:w_opt_synthesis} can be expressed as:
\begin{equation}\label{def:synthesis_codebook_w}
\bm{\mathcal{W}}
=
\bigg\{
\sqrt{\delta_{3\ell -2}} 
\widehat{\mathbf{w}}_1(\boldsymbol{\theta}_\ell),
\sqrt{\delta_{3\ell -1}} 
\widehat{\mathbf{w}}_2(\boldsymbol{\theta}_\ell),
\sqrt{\delta_{3\ell}} 
\widehat{\mathbf{w}}_3(\boldsymbol{\theta}_\ell)
\bigg\}_{\ell=1}^{L}.
\end{equation}

In \eqref{def:synthesis_codebook} and \eqref{def:synthesis_codebook_w}, we separate the power‐allocation coefficients to emphasize that they are the only remaining parameters to be optimized, and to streamline the exposition of their optimization procedure in the next subsection.

\subsection{Power Allocation Optimization}\label{sec:power_alloc_synthesis}
The power allocation optimization problem inside an uncertainty region $\bm{\mathcal{U}}$ can be formulated as:
\begin{subequations} \label{opt:power_alloc}
\begin{align}
\underset{
\{\delta_t\}_{t=1}^{N_t}
}
{\textnormal{min}}
\,
\underset{
\bm{p}_{u}\in\bm{\mathcal{U}}
}
{\textnormal{max}}
\quad  
&
\mathrm{PEB}
\left(
\{
\delta_t\widehat{\mathbf{W}}_t
\}_{t=1}^{N_t}
;\ 
\boldsymbol{\eta}(\mathbf{p}_{u})
\right)
\\
\textnormal{s.t.} \quad
&
\delta_t\ge 0,\ 
\sum_{t=1}^{N_t}\delta_t=1,
\label{opt:delta_sum}
\end{align}
\end{subequations}
where $\delta_t\widehat{\mathbf{W}}_t=\delta_t\widehat{\mathbf{w}}_t\widehat{\mathbf{w}}_t^\mathsf{H}$ is the covariance matrix of the $t$-th codeword in \eqref{def:synthesis_codebook_w}. 
First, we obtain the semi-infinite version of this optimization problem by using its epigraph form \cite{boyd2004convex}:
\begin{subequations} \label{opt:power_alloc_step2}
\begin{align}
\underset{
\varrho,\{\delta_t\}_{t=1}^{N_t}
}
{\textnormal{min}}
\quad  
&
\varrho
\\
\textnormal{s.t.} \quad
&
\mathrm{PEB}
\left(
\{
\delta_t\widehat{\mathbf{W}}_t
\}_{t=1}^{N_t}
;\ 
\boldsymbol{\eta}(\mathbf{p}_{u})
\right)
\le \varrho,\ 
\forall \mathbf{p}_u\in\bm{\mathcal{U}},
\\
& 
\eqref{opt:delta_sum}
.
\end{align}
\end{subequations}
To solve this semi-infinite optimization problem, we discretize $\bm{\mathcal{U}}$ into $N_u$ points $\{\mathbf{p}_{u,i}\}_{i=1}^{N_u}$, and obtain the following approximated problem using Schur complement \cite{boyd2004convex}:
\begin{subequations} \label{opt:power_alloc_step3}
\begin{align}
& 
\underset{
\varrho,
\{\delta_t\}_{t=1}^{N_t}, 
\{u_{i,m}\}_{i=1,m=1}^{N_u,3}
}
{\textnormal{min}}
\varrho
\\
\textnormal{s.t.} \quad
&
\begin{bmatrix}
\mathbf{J}_{\bm{\eta}}\left(
\{
\delta_t\widehat{\mathbf{W}}_t
\}_{t=1}^{N_t}
;\ 
\boldsymbol{\eta}(\mathbf{p}_{u,i})\right) & [\mathbf{I}_5]_{:,m}\\
[\mathbf{I}_5]_{:,m}^{\mathsf{T}} & u_{i,m}
\end{bmatrix}
\succeq 0,
\label{opt:lmi_ineq}
\\
& 
\sum_{m=1}^{3}u_{i,m} \le \varrho,
\\
& 
\eqref{opt:delta_sum}, \notag
\\
&
m=1,2,3,\ i=1,\dots, N_u.
\notag
\end{align}
\end{subequations}
From the proof of Lemma~\ref{lemma:W_affine}, each element of the \ac{fim} is an affine function of the codeword covariance matrices $\{\delta_{t}\widehat{\mathbf{W}}_{t}\}_{t=1}^{N_{t}}$. 
Hence, the constraint in \eqref{opt:lmi_ineq} is a \ac{lmi} in the variables $\{\delta_{t}\}_{t=1}^{N_{t}}$ and $\{u_{i,m}\}$, and the power-allocation problem in \eqref{opt:power_alloc_step3} is a convex \ac{sdp} \cite{boyd2004convex} in the coefficients $\{\delta_{t}\}_{t=1}^{N_t}$. 
Consequently, it can be efficiently solved using convex optimization solvers.  
\vspace{-0.7cm}
\subsection{Complexity of Power Optimization}
According to \cite[Ch.~11]{nemirovski2004interior}, \cite{Keskin2022Optimal}, the \ac{sdp} has the following complexity:
$
O\left(U^2\sum_{i=1}^{K}n_i^2
+
U\sum_{i=1}^{K}n_i^3
\right),
$
where $U$, $K$ denote the number of optimization variables and the number of \ac{lmi} constraints, respectively, and $n_i$ is the row/column width of the $i$-th \ac{lmi} matrix. 
In the \ac{sdp} \eqref{opt:power_alloc}, we have $U=3N_u+N_t+1$ and $K=3N_u$. 
Thus, assuming $N_u\gg N_t$, the complexity of \eqref{opt:power_alloc} is approximately $O(N_u^3)$. 

\vspace{-0.2cm}
\section{Hybrid Codebook Design: Finite State Selection Model}
In this section, we investigate the same problem discussed in Sec.~\ref{sec:synthesis_optimization} using the finite-state selection \ac{er-fas} model explained in Sec.~\ref{sec:binary}. Under the perfect knowledge of the state parameters $\boldsymbol{\eta}$, the problem can be formulated as follows:
\begin{subequations} \label{optimization_problem_d}
\begin{align}
\underset{
\{\overline{\mathbf{E}}_{t}\}_{t=1}^{N_t},\{\mathbf{f}_t\}_{t=1}^{N_t}
}
{\textnormal{min}}
\quad  
&
\mathrm{PEB}
\left(
\{\overline{\mathbf{E}}_{t}\}_{t=1}^{N_t},
\{\mathbf{f}_t\}_{t=1}^{N_t};\ 
\boldsymbol{\eta}
\right)
\label{def:opt-prob_d}
\\
\textnormal{s.t.} \quad
&
\lVert [\overline{\mathbf{E}}_t]_{m,[(m-1)Q+1:mQ]}\rVert_0=1,
\label{opt:em_const_d}
\\
&
\overline{\mathbf{E}}_{t}\in\{0,1\}^{M\times MS}
\notag
\\
&
\sum_{t=1}^{N_t}\mathbf{f}_t^\mathsf{H}\mathbf{f}_t=1, \label{opt:pow_const_d}
\\
&
m=1,\dots, M,\ 
t=1,\dots, N_t, \notag
\end{align}
\end{subequations}
where the constraint \eqref{opt:em_const_d} is equivalent to unit norm constraint $\lVert \overline{\mathbf{e}}_{m,t}\rVert_0=1$ discussed in Sec.~\ref{sec:binary}, and \eqref{opt:pow_const_d} is the power constraint.
Problem \eqref{optimization_problem_d} is non-convex with a large number of coupled discrete and continuous variables. 
To render \eqref{optimization_problem_d} more tractable and simpler, similar to Sec.~\ref{sec:opt_structure_synthesis}, we define the vectors $\{\overline{\mathbf{w}}_t\}_{t=1}^{N_t}$ as follows:
\begin{equation}\label{def:w_d}
\overline{\mathbf{w}}_t
=
\overline{\mathbf{E}}_t^\mathsf{T}
\mathbf{f}_{t}
\in 
\mathbb{C}^{MS}.
\end{equation}
In a similar manner to Sec.~\ref{sec:opt_structure_synthesis}, the following equivalent problem can be obtained:
\begin{subequations} \label{opt_prob_represent_d}
\begin{align}
\underset{
\overline{\mathbf{W}}
}
{\textnormal{min}}
\quad  
&
\mathrm{PEB}
\left(
\overline{\mathbf{W}};
\boldsymbol{\eta}
\right)
\label{def:opt-prob-represent_d}
\\
\textnormal{s.t.} \quad
&
\mathrm{tr}\left(
\overline{\mathbf{W}}
\right)=1
, \label{opt:pow_const_w_d}
\\
& \mathrm{rank}(\overline{\mathbf{W}})\le N_t,
\label{opt:rank_1_constraints_d}
\end{align}
\end{subequations}
where the power constraint is equivalent to \eqref{opt:pow_const_d}
\footnote{From \eqref{def:w_d}, we have $\lVert\overline{\mathbf{w}}_t\rVert
=
\lVert\overline{\mathbf{E}}_t^\mathsf{T}
\mathbf{f}_{t}\rVert=\lVert\mathbf{f}_t\rVert
$. Hence, 
$\mathrm{tr}\left(
\overline{\mathbf{W}}
\right)=\sum_{t=1}^{N_t}\mathrm{tr}\left(
\overline{\mathbf{W}}_t
\right)=
\sum_{t=1}^{N_t}
\lVert
\overline{\mathbf{w}}_t
\rVert^2
=
\sum_{t=1}^{N_t}
\lVert
\mathbf{f}_t
\rVert^2=1
$.
},
which can be transformed to a convex problem by dropping the rank constraint in \eqref{opt:rank_1_constraints_d}. 
Hence, due to an analogy between problems \eqref{opt_prob_represent} and \eqref{opt_prob_represent_d}, for a given 2D-\ac{aod} $\boldsymbol{\theta}$, the following three codewords are sufficient to achieve the optimal value:
\begin{equation}\label{eq:optimal_w_d}
\widehat{\overline{{\mathbf{w}}}}_i=
\sqrt{\delta_i}
\frac{\overline{\mathbf{c}}^{(i)}(\boldsymbol{\theta})^{*}}{\lVert \overline{\mathbf{c}}^{(i)}(\boldsymbol{\theta})\rVert},
\end{equation}
where
\begin{equation}\label{def:c_partials_d}
\overline{\mathbf{c}}^{(1)}(\boldsymbol{\theta})
=
\overline{\mathbf{c}}(\boldsymbol{\theta})
,\ 
\overline{\mathbf{c}}^{(2)}(\boldsymbol{\theta})
=
\frac{
\partial\overline{\mathbf{c}}(\boldsymbol{\theta})
}{\partial\theta^{\text{el}}}
,\ 
\overline{\mathbf{c}}^{(3)}(\boldsymbol{\theta})
=
\frac{
\partial\overline{\mathbf{c}}(\boldsymbol{\theta})}{\partial\theta^{\text{az}}}.
\end{equation}
In \eqref{eq:optimal_w_d}, $\delta_1$, $\delta_2$, and $\delta_3$ are power allocation coefficients.
In the rest of the paper, we will refer to these three codewords by Type-1, Type-2, and Type-3 codewords, respectively, which depend on the 2D-\ac{aod} $\boldsymbol{\theta}$.


Although the three codewords in \eqref{eq:optimal_w_d} achieve the optimum of \eqref{opt_prob_represent_d}, for a given state vector $\boldsymbol{\eta}$, 
the closed-from \ac{bb} and \ac{em} precoders in \eqref{def:w_d} that satisfy these codewords cannot be found because of the $\ell_0$-norm constraint in \eqref{opt:em_const_d} unique to the finite-state selection model.
To address this, we employ an alternating‐optimization procedure to construct admissible codewords $\widetilde{\overline{\mathbf{w}}}_{i}$ whose beampatterns closely approximate those of the optimal vectors $\widehat{\overline{\mathbf{w}}}_{i}$. 
Specifically, we first derive a closed‐form solution for the \ac{bb} precoders with fixed \ac{em} precoders $\{\overline{\mathbf{E}}_{i}\}_{i=1}^{3}$, then optimize the \ac{em} precoders, and finally present a robust codebook design along with the optimization of power-allocation coefficients.
\vspace{-0.2cm}
\subsection{Baseband Precoders Design Under Fixed EM Precoders}
In this subsection, we obtain a closed-form structure of the optimal \ac{bb} precoders for fixed \ac{em} precoders $\{\overline{\mathbf{E}}_{i}\}_{i=1}^{3}$. 
With this assumption, the received interference-free signal at the $t$-th transmission, can be represented as follows:
\begin{equation}\label{eq:y-def-simp-d-v1}
\mathbf{y}_{t}
=
\sqrt{P}\,\alpha\,
\mathbf{d}(\tau)\,
\mathbf{q}_t(\boldsymbol{\theta})^{\mathsf{T}}\,
\mathbf{f}_{t}
\;+\;
\mathbf{v}_{t},
\end{equation}
where $\mathbf{q}_t(\boldsymbol{\theta})$ only depends on $\boldsymbol{\theta}$ due to fixed \ac{em} precoders.
Thus, the optimization problem \textit{under fixed \ac{em} precoders} can be formulated as follows:
\begin{subequations} \label{optimization_problem_d_v1}
\begin{align}
\underset{
\{\mathbf{f}_t\}_{t=1}^{N_t}
}
{\textnormal{min}}
\quad  
&
\mathrm{PEB}
\left(
\{\mathbf{f}_t\}_{t=1}^{N_t};\ 
\{\overline{\mathbf{E}}_{t}\}_{t=1}^{N_t}
,
\boldsymbol{\eta}
\right)
\label{def:opt-prob_fixed_em}
\\
\textnormal{s.t.} \quad
&
\eqref{opt:pow_const_d}.
\notag
\end{align}
\end{subequations}
\begin{proposition}\label{prop:optimal_precoder_d}
Assuming a perfect knowledge of \ac{ue} position, and based on the optimal low-dimensional structure of the \ac{bb} precoders under fixed \ac{em} precoders, we propose the following three \ac{bb} precoders, which approximately achieve the optimum of \eqref{optimization_problem_d_v1}:
\begin{equation}\label{eq:digital_opt_structure_d}
\hat{\mathbf{f}}_i
=
\sqrt{\delta_i}
\frac{\overline{\mathbf{E}}_i
\overline{\mathbf{c}}^{(i)}(\boldsymbol{\theta})^{*}}{\lVert \overline{\mathbf{E}}_i
\overline{\mathbf{c}}^{(i)}(\boldsymbol{\theta})\rVert}
,
\end{equation}
for $i=1,2,3$, 
where $\delta_i$ is the proportion of the total power dedicated to the $i$-th codeword.
\end{proposition}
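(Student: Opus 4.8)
The plan is to reduce the fixed-EM problem \eqref{optimization_problem_d_v1} to a conventional \ac{miso} localization problem and then reuse the convexity and low-rank machinery already established in Lemma~\ref{lemma:W_affine} and Prop.~\ref{prop:opt_structure}. First I would substitute the fixed selection matrix $\overline{\mathbf{E}}_t$ into \eqref{eq:y-def-simp-d-v1} and collect the effective array response $\mathbf{q}_t(\boldsymbol{\theta})=\overline{\mathbf{E}}_t\overline{\mathbf{c}}(\boldsymbol{\theta})\in\mathbb{C}^{M}$ together with its derivative counterparts $\mathbf{q}_t^{(j)}(\boldsymbol{\theta})=\overline{\mathbf{E}}_t\overline{\mathbf{c}}^{(j)}(\boldsymbol{\theta})$, $j=1,2,3$, where $\overline{\mathbf{c}}^{(j)}$ is as in \eqref{def:c_partials_d}. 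Because $\overline{\mathbf{E}}_t$ no longer depends on the unknown channel parameters, differentiating $\mathbf{x}_t$ with respect to $\boldsymbol{\gamma}$ yields exactly the algebraic form of the proof of Lemma~\ref{lemma:W_affine}, with every occurrence of $\mathbf{c}^{(j)}(\boldsymbol{\theta})^{\mathsf{T}}\mathbf{w}_t$ replaced by the scalar beamforming gain $\zeta_t^{(j)}=\mathbf{q}_t^{(j)}(\boldsymbol{\theta})^{\mathsf{T}}\mathbf{f}_t$. Hence each entry of $\mathbf{J}_{\boldsymbol{\gamma}}$ is a real-linear function of the $3\times3$ Gram matrix $\boldsymbol{\Phi}=\sum_t\boldsymbol{\zeta}_t\boldsymbol{\zeta}_t^{\mathsf{H}}$, with $\boldsymbol{\zeta}_t=[\zeta_t^{(1)},\zeta_t^{(2)},\zeta_t^{(3)}]^{\mathsf{T}}$, and the \ac{peb} is a convex function of $\boldsymbol{\Phi}$ by the composition rule.

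Next I would establish the low-dimensional structure of the optimal $\mathbf{f}_t$. Writing $\mathbf{f}_t=\mathbf{f}_t^{\parallel}+\mathbf{f}_t^{\perp}$, with $\mathbf{f}_t^{\parallel}$ in $\mathrm{span}\{\mathbf{q}_t^{(1)*},\mathbf{q}_t^{(2)*},\mathbf{q}_t^{(3)*}\}$ and $\mathbf{f}_t^{\perp}$ orthogonal to it, I observe that $\zeta_t^{(j)}=\mathbf{q}_t^{(j)}(\boldsymbol{\theta})^{\mathsf{T}}\mathbf{f}_t^{\parallel}$, since the inner product of $\mathbf{q}_t^{(j)*}$ with $\mathbf{f}_t^{\perp}$ vanishes. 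Thus $\mathbf{f}_t^{\perp}$ contributes nothing to $\boldsymbol{\Phi}$, hence nothing to the \ac{fim}, while it only consumes power through \eqref{opt:pow_const_d}; discarding it cannot increase the \ac{peb}. This confines each optimal \ac{bb} precoder to the three-dimensional subspace spanned by the conjugated effective steering vectors, mirroring the rank-three structure of Prop.~\ref{prop:opt_structure}, and shows that $N_t=3$ beams suffice.

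Finally, I would specialize this structure to the three matched codewords claimed in \eqref{eq:digital_opt_structure_d}. Assigning the $i$-th transmission the dedicated task of exciting the $i$-th information direction, I would fix the EM precoder $\overline{\mathbf{E}}_i$ and choose $\mathbf{f}_i$ to maximize $\lvert\zeta_i^{(i)}\rvert=\lvert\mathbf{q}_i^{(i)}(\boldsymbol{\theta})^{\mathsf{T}}\mathbf{f}_i\rvert$ under the budget $\lVert\mathbf{f}_i\rVert^2=\delta_i$; by Cauchy--Schwarz this is the conjugate-matched (maximum-ratio) beamformer $\mathbf{f}_i=\sqrt{\delta_i}\,\mathbf{q}_i^{(i)*}/\lVert\mathbf{q}_i^{(i)}\rVert$, which is precisely \eqref{eq:digital_opt_structure_d}. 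Together the three beams populate all three diagonal directions of $\boldsymbol{\Phi}$, rendering $\mathbf{J}_{\boldsymbol{\gamma}}$ full rank, and the remaining freedom collapses to the power split $\{\delta_i\}$ handled later by the power-allocation program.

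The main obstacle I expect is justifying the word \emph{approximately}. The genuinely optimal precoders from the span argument would let each $\mathbf{f}_t$ be an arbitrary combination of all three conjugated directions, producing a possibly non-diagonal $\boldsymbol{\Phi}$; restricting to one matched direction per beam forces an essentially diagonal $\boldsymbol{\Phi}$ and is therefore optimal only up to the cross-direction coupling $\mathbf{q}_i^{(j)}(\boldsymbol{\theta})^{\mathsf{T}}\mathbf{f}_i$ for $j\neq i$, exactly as in the synthesis design of Prop.~\ref{prop:opt_precoders}. I would argue that this coupling is immaterial: since the three $\overline{\mathbf{c}}^{(j)}$ are near-orthogonal for the relevant \ac{aod}s, the matched design recovers the dominant eigen-structure of the optimal $\boldsymbol{\Phi}$, and any residual loss is absorbed by optimizing $\{\delta_i\}$. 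The care needed is that, unlike the synthesis case, the effective vectors $\mathbf{q}_i^{(j)}=\overline{\mathbf{E}}_i\overline{\mathbf{c}}^{(j)}$ are reshaped by the discrete selection matrix, so the near-orthogonality must be verified for the chosen $\overline{\mathbf{E}}_i$ rather than assumed; this is precisely why the EM precoders are subsequently refined by the \ac{bcd} step.
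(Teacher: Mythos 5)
Your proposal is correct and follows the paper's high-level strategy — substitute the fixed \ac{em} precoders to form effective responses $\mathbf{q}_t(\boldsymbol{\theta})=\overline{\mathbf{E}}_t\overline{\mathbf{c}}(\boldsymbol{\theta})$, reduce to the synthesis problem, conclude that three beams suffice, and match each beam to one information direction — but your execution differs in a way worth noting. The paper's own proof is essentially three lines of analogy: it replaces $\mathbf{c}(\boldsymbol{\theta})$ and $\mathbf{w}_t$ in \eqref{eq:y-def-simp-d-v1} by $\mathbf{q}_t(\boldsymbol{\theta})$ and $\mathbf{f}_t$, cites Prop.~\ref{prop:opt_structure} and the first part of Prop.~\ref{prop:opt_precoders} to assert $\hat{\mathbf{f}}_i=\sqrt{\delta_i}\,\mathbf{q}^{(i)}(\boldsymbol{\theta})^{*}/\lVert\mathbf{q}^{(i)}(\boldsymbol{\theta})\rVert$ with $\mathbf{q}^{(i)}$ the angular derivatives of $\mathbf{q}_i$, and then uses the fact that $\overline{\mathbf{E}}_i$ is angle-independent (so $\partial(\overline{\mathbf{E}}_i\overline{\mathbf{c}})/\partial\theta=\overline{\mathbf{E}}_i\overline{\mathbf{c}}^{(i)}$) to land on \eqref{eq:digital_opt_structure_d}. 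You instead re-derive the machinery at the vector level: the Gram-matrix formulation $\boldsymbol{\Phi}=\sum_t\boldsymbol{\zeta}_t\boldsymbol{\zeta}_t^{\mathsf{H}}$ with $\zeta_t^{(j)}=\mathbf{q}_t^{(j)}(\boldsymbol{\theta})^{\mathsf{T}}\mathbf{f}_t$, the per-precoder split $\mathbf{f}_t=\mathbf{f}_t^{\parallel}+\mathbf{f}_t^{\perp}$ replacing the paper's covariance-level projection $\mathbf{\Pi}_{\mathbf{C}_w}$, and Cauchy--Schwarz replacing the diagonal-$\mathbf{\Xi}$ restriction. This buys two things the paper's terse citation does not: first, it is self-contained; second, it correctly handles the fact that in the fixed-EM problem the steering vectors $\mathbf{q}_t^{(j)}$ vary with $t$ (because each transmission carries its own $\overline{\mathbf{E}}_t$), so the synthesis-case argument — which is built around a single covariance $\mathbf{W}=\sum_t\mathbf{w}_t\mathbf{w}_t^{\mathsf{H}}$ tested against one fixed $\mathbf{C}_w$ — does not transfer verbatim; your $\boldsymbol{\Phi}$ absorbs the $t$-dependence cleanly. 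What the paper's route buys is brevity and reuse. Both arguments carry the identical ``approximately'' caveat (dedicating one matched direction per beam forfeits the off-diagonal freedom in $\boldsymbol{\Phi}$, respectively $\mathbf{\Xi}$), and your closing observation — that the residual cross-coupling induced by the discrete $\overline{\mathbf{E}}_i$ is precisely what the subsequent \ac{bcd} refinement addresses — is consistent with how the paper positions Algorithm~\ref{alg:EM_precoder_opt}.
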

\begin{proof}
See Appendix~\ref{app:prop3_proof}.
\end{proof}

\vspace{-0.5cm}
\subsection{Proposed Heuristic Optimization for EM Precoders}
In Prop.~\ref{prop:optimal_precoder_d}, we proposed three codewords for \ac{bb} precoders under fixed \ac{em} precoders based on optimal low-dimensional solutions of \eqref{optimization_problem_d}. 
In this subsection, we investigate the optimization of the corresponding \ac{em} precoders $\{\overline{\mathbf{E}}_{i}\}_{i=1}^{3}$.
By substituting the optimal structure of the \ac{bb} precoders \eqref{eq:digital_opt_structure_d} into \eqref{def:w_d}, we obtain the following simplified equation:
\begin{align}\label{def:w_d_simp}
\overline{\mathbf{w}}_i(\overline{\mathbf{E}}_i,\boldsymbol{\theta})
=
\sqrt{\delta_i}
\frac{
\overline{\mathbf{E}}_i^\mathsf{T}
\overline{\mathbf{E}}_i
\overline{\mathbf{c}}^{(i)}(\boldsymbol{\theta})^{*}}{\lVert \overline{\mathbf{E}}_i
\overline{\mathbf{c}}^{(i)}(\boldsymbol{\theta})\rVert}
,
\end{align}
for $i\in\{1,2,3\}$, where we have highlighted that the candidate vectors $\overline{\mathbf{w}}_i$ can be viewed as a function of $\overline{\mathbf{E}}_i,\boldsymbol{\theta}$.

For each $i\in\{1,2,3\}$, we optimize the \ac{em} precoding matrix $\overline{\mathbf{E}}_{i}$ so that the resulting beampattern 
of $\overline{\mathbf{w}}_{i}(\overline{\mathbf{E}}_{i},\boldsymbol{\theta})$ best approximates the non‐admissible optimal vector 
$\widehat{\overline{\mathbf{w}}}_{i}$ in \eqref{eq:optimal_w_d}. 
To this end, let 
$
\mathbf{\Theta}\in\mathbb{R}^{N_{g}\times 2}
$
be a discretized 2D‐\ac{aod} grids of $N_{g}$ angle pairs, and let 
$
\overline{\mathbf{C}}\in\mathbb{C}^{MS\times N_{g}}
$
collect the corresponding vectors $\overline{\mathbf{c}}(\boldsymbol{\theta}_n)$ at each grid point i.e., $\overline{\mathbf{C}}=[\overline{\mathbf{c}}(\boldsymbol{\theta}_1),\dots ,\overline{\mathbf{c}}(\boldsymbol{\theta}_{N_g})]$. 
The design of $\overline{\mathbf{E}}_{i}$, which depends on the true 2D-\ac{aod} $\boldsymbol{\theta}$, can then be cast as the following optimization problem:
\begin{equation}\label{optimization_EM_LS}
\widehat{\overline{\mathbf{E}}}_i(\boldsymbol{\theta})
=
\underset{
\overline{\mathbf{E}}_{i}
}
{\textnormal{argmin}}
\quad  
\lVert 
\overline{\mathbf{C}}^\mathsf{T}
\overline{\mathbf{w}}_{i}(\overline{\mathbf{E}}_{i},\boldsymbol{\theta})
-
\overline{\mathbf{C}}^\mathsf{T}
\widehat{\overline{{\mathbf{w}}}}_i
\rVert^2,
\end{equation}
where $\overline{\mathbf{C}}^\mathsf{T}
\overline{\mathbf{w}}_{i}$ 
and 
$\overline{\mathbf{C}}^\mathsf{T}
\widehat{\overline{{\mathbf{w}}}}_i$ 
are the beampatterns of $\overline{\mathbf{w}}_{i}$ 
and 
$\widehat{\overline{{\mathbf{w}}}}_i$ 
evaluated at the grids $\boldsymbol{\theta}_1,\dots ,\boldsymbol{\theta}_{N_g}$, respectively (see \eqref{eq:y-def-simp-discrete} and \eqref{def:w_d}).
After substituting \eqref{eq:optimal_w_d} and \eqref{def:w_d_simp} into \eqref{optimization_EM_LS}, the problem can be represented as:

\begin{equation}\label{optimization_EM_LS_represent}
\widehat{\overline{\mathbf{E}}}_i
(\boldsymbol{\theta})
=
\underset{
\overline{\mathbf{E}}_{i}
}
{\textnormal{argmin}}
\quad  
\underbrace{
\Bigg\lVert 
\frac{
\overline{\mathbf{C}}^\mathsf{T}
\overline{\mathbf{E}}_i^\mathsf{T}
\overline{\mathbf{E}}_i
\overline{\mathbf{c}}^{(i)}(\boldsymbol{\theta})^{*}}{\lVert \overline{\mathbf{E}}_i
\overline{\mathbf{c}}^{(i)}(\boldsymbol{\theta})\rVert}
-
\frac{
\overline{\mathbf{C}}^\mathsf{T}
\overline{\mathbf{c}}^{(i)}(\boldsymbol{\theta})^{*}}{\lVert \overline{\mathbf{c}}^{(i)}(\boldsymbol{\theta})\rVert}
\Bigg\rVert^2
}_{\mathcal{G}_i(\overline{\mathbf{E}}_{i};\ \boldsymbol{\theta})},
\end{equation}
where we have dropped the power allocation coefficient $\delta_i$ as it does not affect the optimization.

Solving \eqref{optimization_EM_LS_represent} by exhaustive search over all $S^{M}$ combinations is computationally expensive when $S$ or $M$ is large. 
Thus, we adopt the \ac{bcd} algorithm detailed in Algorithm~\ref{alg:EM_precoder_opt}. 
Specifically, we initialize the \ac{em} precoder $\overline{\mathbf{E}}_{i}$ and then, at each iteration, sequentially sweep through all $M$ antennas. 
For each antenna, we select the pattern that minimizes the objective function while holding the others fixed. 
The algorithm terminates when either the absolute change in the objective between consecutive iterations falls below a threshold $\epsilon$, or the iteration count exceeds $N_{\max}$.  

\begin{algorithm}
\caption{Proposed BCD algorithm for \ac{em} precoder optimization}\label{alg:EM_precoder_opt}
\begin{algorithmic}[1]
\State 
\multiline{%
\textbf{Inputs}: 
Non-admissible optimal codeword $\widehat{\overline{\mathbf{w}}}_i$ defined in \eqref{eq:optimal_w_d}, 2D‐\ac{aod} $\boldsymbol{\theta}$, matrix $\overline{\mathbf{C}}$, convergence threshold $\epsilon$, max iterations $N_{\max}$.
}
\State 
\multiline{%
\textbf{Output}:
Optimal \ac{em} precoder matrix $\widehat{\overline{\mathbf{E}}}_i$
}
\State 
\multiline{%
\textbf{Initialization}: 
Randomly initialize $\overline{\mathbf{E}}_i$, and set iteration number $r\gets 0$. 
Moreover, set the initial objective function $\mathcal{G}_i^{0}$ using \eqref{optimization_EM_LS_represent}.
}
\While{$\lvert \mathcal{G}^{r}_i-\mathcal{G}^{r-1}_i\rvert>\epsilon$ or $r>N_{\max}$}
\For{Antenna index $m=1:M$}
\State 
\multiline{%
Update the state of the $m$-th antenna to $s$, 

i.e., $[\overline{\mathbf{E}}_i]_{m,((m-1)S+s)}=1$, which would achieve the lowest objective function $\mathcal{G}_i$ in \eqref{optimization_EM_LS_represent}. 
}
\EndFor
\State
\multiline{%
Increment iteration number: $r\gets r+1$.
}
\State
\multiline{%
Calculate the objective function $\mathcal{G}^{r}_i$ using \eqref{optimization_EM_LS_represent}.
}
\EndWhile
\end{algorithmic}
\end{algorithm}
\vspace{-0.5cm}
\subsection{Proposed Codebook}\label{sec:codebook_finite_state_model}
Similar to Sec.~\ref{sec:codebook_synthesis}, let $\{\boldsymbol{\theta}_\ell\}_{\ell=1}^{L}$ be $L$ uniformly spaced \acp{aod} spanning the uncertainty region $\mathcal{U}$. 
Based on the optimal \ac{bb} precoders in \eqref{eq:digital_opt_structure_d} and the optimal \ac{em} precoders in \eqref{optimization_EM_LS_represent}, we construct a codebook of $N_{t}=3L$ codewords, comprising both \ac{bb} and \ac{em} precoders, as follows:
\begin{align}
\label{def:finite_state_codebook}
\overline{\bm{\mathcal{E}}}
& =
\bigg\{
\widehat{\overline{\mathbf{E}}}_{1}(\boldsymbol{\theta}_\ell),
\widehat{\overline{\mathbf{E}}}_{2}(\boldsymbol{\theta}_\ell),
\widehat{\overline{\mathbf{E}}}_{3}(\boldsymbol{\theta}_\ell)
\bigg\}_{\ell=1}^{L} \,.
\\  \notag 
\overline{\bm{\mathcal{F}}}
& =
\bigg\{
\sqrt{\delta_{3\ell-2}}
\hat{\mathbf{f}}_1(\boldsymbol{\theta}_\ell),
\sqrt{\delta_{3\ell-1}}
\hat{\mathbf{f}}_2(\boldsymbol{\theta}_\ell),
\sqrt{\delta_{3\ell}}
\hat{\mathbf{f}}_3(\boldsymbol{\theta}_\ell)
\bigg\}_{\ell=1}^{L},
\end{align} 

The equivalent representation of this codebook based on optimized feasible vectors $\widetilde{\overline{\mathbf{w}}}_i(\boldsymbol{\theta})=\overline{\mathbf{w}}_i(\widehat{\overline{\mathbf{E}}}_i,\boldsymbol{\theta})$ (obtained by substituting \eqref{optimization_EM_LS_represent} into \eqref{def:w_d_simp}) can be written as:
\begin{equation}\label{def:finite_state_codebook_w}
\overline{\bm{\mathcal{W}}}
=
\bigg\{
\sqrt{\delta_{3\ell -2}} 
\widetilde{\overline{\mathbf{w}}}_1(\boldsymbol{\theta}_\ell),
\sqrt{\delta_{3\ell -1}} 
\widetilde{\overline{\mathbf{w}}}_2(\boldsymbol{\theta}_\ell),
\sqrt{\delta_{3\ell}} 
\widetilde{\overline{\mathbf{w}}}_3(\boldsymbol{\theta}_\ell)
\bigg\}_{\ell=1}^{L} \,.
\end{equation}

In \eqref{def:finite_state_codebook} and \eqref{def:finite_state_codebook_w}, we explicitly factor out the power‐allocation coefficients to underscore that they are the only remaining optimization variables and to simplify the presentation of their optimization in the next subsection.
\vspace{-0.2cm}
\subsection{Power Allocation Optimization}\label{sec:power_alloc_d}
The power allocation optimization problem given an uncertainty \ac{ue} position region $\bm{\mathcal{U}}$ can be formulated as:

\begin{subequations} \label{opt:power_alloc_d}
\begin{align}
\underset{
\{\delta_t\}_{t=1}^{N_t}
}
{\textnormal{min}}
\,
\underset{
\bm{p}_{u}\in\bm{\mathcal{U}}
}
{\textnormal{max}}
\quad  
&
\mathrm{PEB}
\left(
\{
\delta_t\widetilde{\overline{\mathbf{W}}}_t
\}_{t=1}^{N_t}
;\ 
\boldsymbol{\eta}(\mathbf{p}_{u})
\right)
\\
\textnormal{s.t.} \quad
&
\delta_t\ge 0,\ 
\sum_{t=1}^{N_t}\delta_t=1,
\end{align}
\end{subequations}
where $\delta_t\widetilde{\overline{\mathbf{W}}}_t=\delta_t\widetilde{\overline{\mathbf{w}}}_t\widetilde{\overline{\mathbf{w}}}_t^\mathsf{H}$ is the covariance matrix of the $t$-th codeword in \eqref{def:finite_state_codebook_w}.
Problem \eqref{opt:power_alloc_d} is similar to its counterpart for the synthesis scenario \eqref{opt:power_alloc}. 
Thus, this problem is also convex and the same methodologies explained in Sec.~\ref{sec:power_alloc_synthesis} can be applied here.
\vspace{-0.3cm}
\subsection{Complexity of Algorithm~\ref{alg:EM_precoder_opt}}
First, we obtain the complexity of evaluating $\mathcal{G}_i(\overline{\mathbf{E}}_{i};\ \boldsymbol{\theta})$ in \eqref{optimization_EM_LS_represent}. 
Since the second term, i.e., $\overline{\mathbf{C}}^\mathsf{T}
\overline{\mathbf{c}}^{(i)}(\boldsymbol{\theta})^{*}/\lVert \overline{\mathbf{c}}^{(i)}(\boldsymbol{\theta})\rVert$ does not depend on $\overline{\mathbf{E}}_{i}$, it is calculated once at the beginning of the algorithm with complexity $O(N_gMS)$. 
For the first term, we use the fact that 
$
\overline{\mathbf{E}}_i^\mathsf{T}
\overline{\mathbf{E}}_i
=
\mathrm{diag}(\widetilde{\mathbf{e}}_i)
$, where 
$
\widetilde{\mathbf{e}}_i
=
[\overline{\mathbf{e}}_{1,i}^\mathsf{T}, \dots, \overline{\mathbf{e}}_{M,i}^\mathsf{T}]^\mathsf{T}
\in 
\{0,1\}^{MS}.
$
Thus, 
$
\overline{\mathbf{E}}_i^\mathsf{T}
\overline{\mathbf{E}}_i
\overline{\mathbf{c}}^{(i)}(\boldsymbol{\theta})^{*}
=
\widetilde{\mathbf{e}}_i
\odot 
\overline{\mathbf{c}}^{(i)}(\boldsymbol{\theta})^{*}.
$
Hence, updating the $m$-th antenna state i.e., modifying $\overline{\mathbf{e}}_{m,i}$, requires the complexity of $O(N_{g})$ to update the first term of $\mathcal{G}_{i}(\overline{\mathbf{E}}_{i};\boldsymbol{\theta})$.  
Consequently, line 6 of Algorithm~\ref{alg:EM_precoder_opt}, which evaluates all $S$ candidate states for antenna $m$, incurs $O(SN_{g})$ complexity. 
Sweeping over $M$ antennas in line 5 therefore costs $O(MSN_{g})$ per iteration. 
Over a maximum of $N_{\max}$ iterations, optimizing one codeword has complexity $O(N_{\max}MSN_{g})$, and executing the algorithm for all $N_{t}$ codewords yields a total complexity of $O(N_{t}N_{\max}MSN_{g})$.

\begin{remark}
The codebooks in \eqref{def:synthesis_codebook} and \eqref{def:finite_state_codebook} can be constructed entirely offline.  
Concretely, for each 2D-\ac{aod} inside the uncertainty region, we precompute the corresponding \ac{bb} and \ac{em} codewords on a discretized angular grid with $\theta^{\mathrm{az}}\in[-\pi,\pi)$ and $\theta^{\mathrm{el}}\in[0,\pi]$ at a chosen resolution. 
In online scenario, the \ac{bs} then selects the precomputed entries that cover the given uncertainty region, avoiding costly real-time optimization and enabling low-latency deployment.
\end{remark}
\vspace{-0.5cm}
\section{Maximum Likelihood Localization}\label{sec:ML_localization}
Due to the analogy between the received signals between the two considered reconfigurability paradigms in \eqref{eq:y-def-simp} and \eqref{eq:y-def-simp-discrete}, we only explain it for the synthesis scenario. 
The same methodology can be applied to finite-state selection model. 
The proposed approach consists of two steps: coarse positioning and refinement step, as detailed next.
\subsection{Step 1: Coarse Positioning}\label{sec:loc_step_1}
\subsubsection{Coarse Delay Estimation}
After substituting \eqref{def:w} in \eqref{eq:y-def-simp}, 
the received signal vector at the $t$-th transmission can be represented as:
\begin{equation}\label{eq:yt_model}
\mathbf{y}_t
=
\beta_t\,\mathbf{d}(\tau)
+
\widetilde{\mathbf{v}}_t,
\qquad 
t=1,\dots,N_t,
\end{equation}
where $\tau$ and 
$\beta_t=\sqrt{P}\,\alpha\,
\mathbf{c}(\boldsymbol{\theta})^{\mathsf{T}}\,
\mathbf{w}_t\in\mathbb{C}$ 
are the delay and unknown complex gain of the \ac{los} path, and $\widetilde{\mathbf{v}}_t\in \mathbb{C}^{N_s}$ denotes the joint contribution of the noise and interference.
The estimation problem can be formulated as follows:
\begin{equation}\label{eq:beta_tau_est_problem}
(\hat{\tau},\{\hat{\beta}_t\}_{t=1}^{N_t}) 
=
\underset{
\tau,\{\beta_t\}_{t=1}^{N_t}
}
{\textnormal{argmin}}
\sum_{t=1}^{N_t}
\lVert
\mathbf{y}_t - \beta_t\mathbf{d}(\tau)
\rVert^2.
\end{equation}
For a given fixed delay $\tau$, the \ac{ml} i.e., the \ac{ls} estimate of $\beta_t$ is obtained in closed form as:
\begin{equation}\label{eq:beta_hat_t}
\hat{\beta}_t(\tau) 
=
\frac{\mathbf{d}(\tau)^{\mathsf{H}}\mathbf{y}_t}{\lVert\mathbf{d}(\tau)\Vert^2}
=\frac{1}{N_s}\,\mathbf{d}(\tau)^{\mathsf{H}}\mathbf{y}_t,
\end{equation}
where we used the fact that $\mathbf{d}(\tau)^{\mathsf{H}}\mathbf{d}(\tau)=N_s$. 
Substituting \eqref{eq:beta_hat_t} into \eqref{eq:beta_tau_est_problem} and simplifying yields the following simplified delay estimation problem:
\begin{equation}\label{eq:tau_hat_ml}
\hat{\tau}
=
\underset{
\tau\in[\tau_{\min},\tau_{\max}]
}
{\textnormal{argmin}}
\sum_{t=1}^{N_t}
\lvert 
\mathbf{d}(\tau)^{\mathsf{H}}\mathbf{y}_t 
\rvert^2
=
\lVert 
\mathbf{d}(\tau)^{\mathsf{H}}\mathbf{Y}
\rVert^2,
\end{equation}
where $\mathbf{Y}=[\mathbf{y}_1,\dots ,\mathbf{y}_{N_t}]$, 
and $\tau_{\min},\tau_{\max}$ are the minimum and maximum possible delays of the \ac{los} path inside the uncertainty region $\bm{\mathcal{U}}$. 
Problem \eqref{eq:tau_hat_ml}, can be solved by performing a simple linear search over $N_\tau$ uniformly spaced grids in the in the interval $[\tau_{\min},\tau_{\max}]$.

\subsubsection{Coarse AOD Estimation}
After substituting the coarse delay estimate $\hat{\tau}$ and \eqref{eq:yt_model} back into \eqref{eq:beta_hat_t}:
\begin{equation}\label{eq:beta_hat_rep}
\hat{\beta}_t=
\frac{\mathbf{d}(\hat{\tau})^{\mathsf{H}}\mathbf{y}_t}{N_s}
=
\underbrace{
\frac{\sqrt{P}\,
\alpha\, 
\mathbf{d}(\hat{\tau})^\mathsf{H}\mathbf{d}(\tau)}{
N_s
}
}_{\kappa}
\mathbf{c}(\boldsymbol{\theta})^{\mathsf{T}}
\mathbf{w}_t
+
\underbrace{
\frac{\mathbf{d}(\hat{\tau})^{\mathsf{H}}\widetilde{\mathbf{v}}_t}{N_s}
}_{n_t}
\,
,
\end{equation}
where $\kappa$ is an unknown scalar, and $n_t$ is the resulted noise scalar after delay beamforming. 
Thus, after collecting the scalars $\hat{\beta}_t$ into a vector $\hat{\boldsymbol{\beta}}=[\hat{\beta}_1,\dots ,\hat{\beta}_{N_t}]^{\mathsf{T}}$, the estimation problem for $\boldsymbol{\theta},\kappa$ can be formulated as follows:
\begin{equation}\label{eq:theta_kappa_est_problem}
(\hat{\boldsymbol{\theta}},\hat{\kappa})
=
\underset{
\boldsymbol{\theta},\ \kappa
}
{\textnormal{argmin}}
\lVert 
\hat{\boldsymbol{\beta}}
-
\kappa\,
\mathbf{s}(\boldsymbol{\theta})
\rVert^2\,
,
\end{equation}
where
$\mathbf{s}(\boldsymbol{\theta})=\mathbf{c}(\boldsymbol{\theta})^{\mathsf{T}}
\mathbf{W}$ 
and 
$\mathbf{W}=[\mathbf{w}_1,\dots ,\mathbf{w}_{N_t}]$. 
For fixed $\boldsymbol{\theta}$, the closed form \ac{ls} estimate of $\kappa$ is obtained as:
$
\hat{\kappa}(\boldsymbol{\theta}) 
=
\frac{\mathbf{s}(\boldsymbol{\theta})^{\mathsf{H}}
\hat{\boldsymbol{\beta}}}{\lVert
\mathbf{s}(\boldsymbol{\theta})
\Vert^2}\,
.
$
Substituting $\hat{\kappa}(\boldsymbol{\theta}) $ into \eqref{eq:theta_kappa_est_problem} and simplifying yields the following simplified 2D-\ac{aod} estimation problem:
\begin{equation}\label{eq:coarse_aoa_est}
\hat{\boldsymbol{\theta}}
=
\underset{
\boldsymbol{\theta}
}
{\textnormal{argmin}}
\frac{
\lvert
\mathbf{s}(\boldsymbol{\theta})^{\mathsf{H}}
\hat{\boldsymbol{\beta}}
\rvert^2}{\lVert
\mathbf{s}(\boldsymbol{\theta})
\Vert^2}\,
,
\end{equation}
which can be solved via a simple 2D search over $N_{\theta}$ \ac{aod} grids inside uncertainty region $\bm{\mathcal{U}}$. 

Hence, combining the coarse delay and 2D-\ac{aod} estimates $\hat{\tau}$ and $\hat{\boldsymbol{\theta}}$, the coarse estimate of the \ac{ue} position is obtained as:
$
\tilde{\bm{p}}_{u}
=
c\,
\hat{\tau}\,
\hat{\mathbf{u}},\  
\hat{\mathbf{p}}_{u}
=
\bm{R}\,
\tilde{\bm{p}}_{u}
+
\bm{p}_b,
$
where $\hat{\bm{u}}$ is the unit direction towards the estimated 2D-\ac{aod} $\hat{\boldsymbol{\theta}}$, and $\tilde{\bm{p}}_{u}$ denotes the estimated position in the local coordinates of the \ac{er-fas}, and finally $\hat{\bm{p}}_{u}$ is the estimated \ac{ue} position in the global coordinates.
\vspace{-0.5cm}
\subsection{Step 2: Position Refinement}\label{sec:loc_step_2}
For the refinement step, we consider direct localization. 
In particular, the estimation problem can be formulated as:
\begin{equation}\label{opt:ML_problem}
(\hat{\zeta},\ \hat{\mathbf{p}}_u)
=
\underset{
\zeta,\ 
\mathbf{p}
}
{\textnormal{argmin}}
\quad  
\sum_{t=1}^{N_t}
\lVert
\mathbf{y}_t
-
\zeta
\mathbf{x}_t
\rVert^2,
\end{equation}
where $\zeta=\sqrt{P}\,\alpha$ is the unknown overall complex gain and
$\mathbf{x}_t=
\mathbf{d}(\tau)\,
\mathbf{c}(\boldsymbol{\theta})^\mathsf{T}
\mathbf{w}_t$. 
It is easy to observe that the \ac{ml} estimate of $\zeta$ minimizing \eqref{opt:ML_problem} is obtained as:
$
\hat{\beta}(\mathbf{p}_u)
=
\frac{
\widetilde{\mathbf{x}}^\mathsf{H}
\widetilde{\mathbf{y}}
}{
\lVert 
\widetilde{\mathbf{x}}
\rVert^2
}\ ,
$
where 
$\widetilde{\mathbf{x}}=[\mathbf{x}_1^\mathsf{T},\dots ,\mathbf{x}_{N_t}^\mathsf{T}]^{\mathsf{T}}$, and 
$\widetilde{\mathbf{y}}=[\mathbf{y}_1^\mathsf{T},\dots ,\mathbf{y}_{N_t}^\mathsf{T}]^{\mathsf{T}}$. 
After substituting $\hat{\beta}(\mathbf{p}_u)$ back into \eqref{opt:ML_problem}, the direct localization problem can be simplified as:
\begin{equation}\label{opt:ML_problem_simp}
\hat{\mathbf{p}}_u
=
\underset{
\mathbf{p}_u
}
{\textnormal{argmin}}
\quad  
\frac{
\lvert \widetilde{\mathbf{x}}^\mathsf{H}
\widetilde{\mathbf{y}}\rvert^2
}{
\lVert 
\widetilde{\mathbf{x}}
\rVert^2
}\ .
\end{equation}

Utilizing the initial position estimate in the previous subsection, we apply the derivative‐free \ac{nm} algorithm \cite{FADAKAR2024103382, fadakar2024deepdoa} to iteratively refine the \ac{ue} position. 
In particular, starting with the initial estimate $\hat{\mathbf{p}}_{u}$ in the previous subsection, this fast simplex-based algorithm searches $\mathcal{U}$ in an off-grid manner, leading to an accurate position estimate.
\vspace{-0.2cm}
\subsection{Complexity of the Proposed Localization Method}
The delay estimation problem \eqref{eq:tau_hat_ml} has the complexity $O(N_sN_tN_\tau)$, the complexity of \ac{aoa} estimation in \eqref{eq:coarse_aoa_est} is $O(N_tN_{\theta})$.
The computation of the objective function \eqref{opt:ML_problem_simp} has the complexity $O(N_tN_sMQ)$. 
Thus, the complexity of the second step, i.e., the position refinement, is given by $O(N_{\text{nm}}N_tN_sMQ)$, where $N_{\text{nm}}$ is the average number of iterations in the \ac{nm} algorithm until convergence. 

\section{Simulations}\label{sec:simulations}
In this section, we evaluate the performance of the proposed codebook design and localization for \ac{er-fas}-assisted mmWave systems through numerical simulations.
\vspace{-0.5cm}
\subsection{Simulation Setup}\label{sec:simulation_setup}
The scenario under consideration includes a \ac{bs} equipped with a \ac{upa} with the same number of rows and columns $M^h=M^v$, and a single \ac{ue}. 
The default system parameters are presented in Table~\ref{tab:sys-params}. 
Note that some parameters may vary in different simulations. 
According to the selected uncertainty region $\mathbf{\mathcal{U}}$ in Table~\ref{tab:sys-params}, the elevation and azimuth intervals are uniformly sampled with a grid step of $d_\theta = d_\phi = \frac{1.8}{M^h}\,\mathrm{rad}$ from which the values of $L$ and $N_t$ are determined\footnote{These values are set according to the half-power beamwidth of the \ac{bs}.}. 
For instance, for $M^h=M^v=5$, $d_\theta = d_\phi\approx 20.63^\circ$ resulting in $L=3$ 2D-\ac{aod}s and $N_t=3L=9$ codewords. 
In the finite-state selection model, we utilize $N_g=1000$ angular grid points to optimize each \ac{em} precoder in solving \eqref{optimization_EM_LS}.
The channel amplitudes are modeled according to the well-known free-space path loss model 
$\rho=\frac{\lambda}{4\pi \lVert \mathbf{p}_b-\mathbf{p}_u\rVert}$ 
for the \ac{los} path and 
$\rho_i=\frac{\sqrt{4\pi s_i}\lambda}{16\pi^2 \lVert \mathbf{p}_b-\mathbf{p}_{s}^{(i)}\rVert\lVert \mathbf{p}_{s}^{(i)}-\mathbf{p}_u\rVert}$ 
for the $i$-th \ac{mpc}. 
The phase components are generated uniformly from the interval $[-\pi, \pi]$. 
The transmit power $P$ is varied to obtain different received \ac{snr} levels on the \ac{los} path, defined as $\mathrm{SNR}=\frac{P\rho^{2}}{N_{0}B}$ \cite{fascista2022ris}, where $N_{0}$ is the noise power spectral density, and $B$ is the bandwidth. 
We use \ac{rmse} and \ac{peb} metrics to evaluate the performance of the proposed methods. 
Each \ac{rmse} value is estimated using $1000$ Monte-Carlo trials. 
In all simulations, optimal power allocation is used unless stated otherwise. 
Moreover, except for the simulations reported in Sec.~\ref{sec:simul_interference}, we assume an interference-free environment.

\begin{table}[ht]
\caption{\label{tab:sys-params} System parameters}
\centering
\fontsize{12}{10}\selectfont 
\resizebox{\columnwidth}{!}{
\begin{tabular}{ |l|l|  }
\hline
Default System Parameters and
Symbol
&
\textbf{Value}
\\
\hline
Carrier frequency $f_c$ & $30\,\mathrm{GHz}$
\\
Noise PSD $N_0$ & $-173.855\,\mathrm{dBm}$
\\
Light speed $c$ & $3\times 10^8\,\mathrm{m/s}$
\\
Subcarrier spacing $\Delta f$ & $200\,\mathrm{kHz}$
\\
Bandwidth $B$ & $100\,\mathrm{MHz}$
\\
UE position $\mathbf{p}_u$ & $[45,5,2]^\mathsf{T}$
\\
Number of \ac{shod} bases 
$Q$ &  $4$
\\
Uncertainty region $\mathbf{\mathcal{U}}$ & $30<x<50,-10<y<10$, $0<z<10$
\\
Elevation Angle Bounds
$[\theta^{\text{el}}_\text{min}, \theta^{\text{el}}_\text{max}]$ 
& 
$[80.53^\circ, 99.46^\circ]$
\\
Azimuth Angle Bounds 
$[\theta^{\text{az}}_\text{min}, \theta^{\text{az}}_\text{max}]$
&
$[-18.43^\circ, 18.43^\circ]$
\\
\ac{bs} position $\mathbf{p}_b$ & $[0,0,5]^\mathsf{T}\,\mathrm{[m]}$
\\
\ac{bs} number of elements $M^{h}$, $M^v$ & $5$, $5$
\\
Localization parameters $N_\tau$, $N_\theta$ & $1000$, $500$
\\
\hline
\end{tabular}
}
\end{table}
\vspace{-0.5cm}
\subsection{Beampattern Analysis}
In our first simulation, we compare the beampatterns of a representative codeword from our proposed synthesis codebook \eqref{def:synthesis_codebook} and finite‐state selection codebook \eqref{def:finite_state_codebook}, both evaluated at $\boldsymbol{\theta}=[90^\circ,0^\circ]^\mathsf{T}$.
The beampatterns are shown for all three types of codewords in Fig.~\ref{fig:patterns}.  
Here, we denote the beampattern for the synthesis model by $p(\boldsymbol{\theta}^{*})$ and for the finite‐state model by $\overline{p}(\boldsymbol{\theta}^{*})$, which are defined as:
\begin{equation}\label{def:beampattern}
p(\boldsymbol{\theta}^{*})
=
\frac{
\lvert\mathbf{c}(\boldsymbol{\theta}^{*})^\mathsf{T}\mathbf{w}\rvert^2}{\lVert \mathbf{w}\rVert^2}
,\ 
\overline{p}(\boldsymbol{\theta}^{*})
=
\frac{
\lvert\overline{\mathbf{c}}(\boldsymbol{\theta}^{*})^\mathsf{T}\overline{\mathbf{w}}\rvert^2}{\lVert \overline{\mathbf{w}}\rVert^2}.
\end{equation}
where $\boldsymbol{\theta}^{*}$ is an arbitrary 2D-\ac{aod} grid. 
From Figs.~\ref{fig:pattern_1d_vs_theta_type_1} and \ref{fig:pattern_1d_vs_phi_type_1}, the \ac{er-fas} achieves approximately $13.98\,\mathrm{dB}$ higher peak beampattern gain than the traditional non-reconfigurable array for $S=64$ in finite-state selection model. 
Similarly, the peak of the beampattern using \ac{er-fas} with $Q=4$ in the synthesis model is $6\,\mathrm{dB}$ above the traditional array.

\begin{figure}
\centering
\begin{subfigure}{\columnwidth}
\centering
\includegraphics[width=\columnwidth]{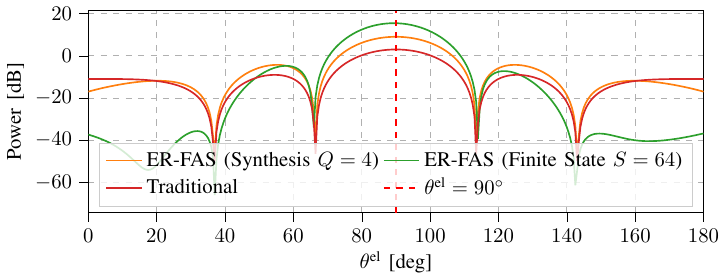}
\caption{1D beampattern of Type-1 codeword for fixed $\theta^{\text{az}}=0^\circ$}
\label{fig:pattern_1d_vs_theta_type_1}
\end{subfigure}%
\hfill
\vspace{6pt}
\begin{subfigure}{\columnwidth}
\centering
\includegraphics[width=\columnwidth]{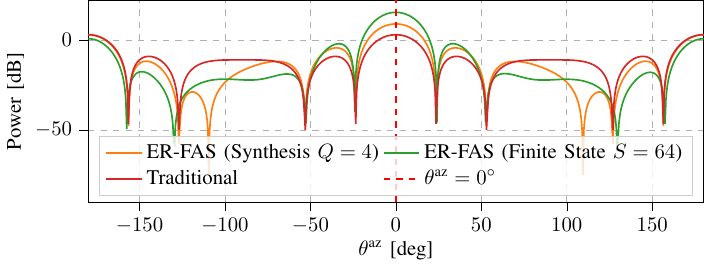}
\caption{1D beampattern of Type-1 codeword for fixed $\theta^{\text{el}}=90^\circ$}
\label{fig:pattern_1d_vs_phi_type_1}
\end{subfigure}%
\hfill
\vspace{6pt}
\begin{subfigure}{\columnwidth}
\centering
\includegraphics[width=\columnwidth]{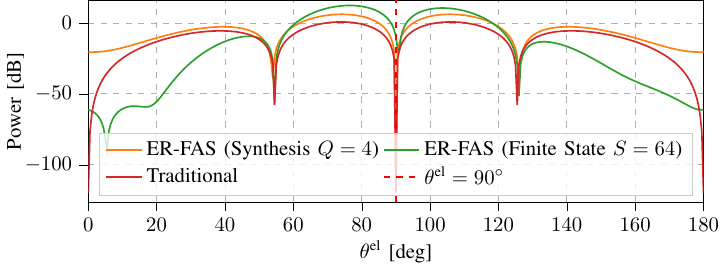}
\caption{1D beampattern of Type-2 codeword for fixed $\theta^{\text{az}}=0^\circ$}
\label{fig:pattern_1d_vs_theta_type_2}
\end{subfigure}%
\hfill
\vspace{6pt}
\begin{subfigure}{\columnwidth}
\centering
\includegraphics[width=\columnwidth]{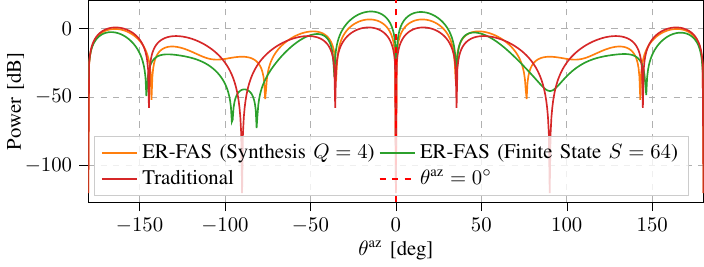}
\caption{1D Beampattern of Type-3 codeword at $\theta^{\text{el}}=90^\circ$, $Q=64$}
\label{fig:pattern_1d_vs_phi_type_3}
\end{subfigure}%
\caption{
1D beampatterns of the optimized codeword.
}
\label{fig:patterns}
\end{figure}
\vspace{-0.2cm}
\subsection{Performance Under Different Power Allocation Schemes}
In this subsection, we assess the performance of the optimal power allocation of the codewords proposed in Sec.~\ref{sec:power_alloc_synthesis} and Sec.~\ref{sec:power_alloc_d}. 
The uniform power allocation scheme is used as a baseline for comparative analysis. 
It is noteworthy that the uniform power allocation is a common heuristic approach in the state-of-the-art approaches \cite{fadakar2024multi, fadakar2025mutual} using traditional arrays. 
To this end, we obtain the \ac{peb} values for different \ac{ue} positions inside a square within $x\in[30,50]$ and $y\in[-10,10]$ at a fixed height $z=2$ and fixed $\mathrm{SNR}=5\,\mathrm{dB}$. 
As shown in Fig.~\ref{fig:crb_maps}, the proposed optimized power allocation consistently outperforms the uniform power allocation strategy, demonstrating its effectiveness.

\begin{figure}[ht]
\centering
\begin{subfigure}[t]{0.5\columnwidth}
\centering
\includegraphics[height=0.12\textheight]{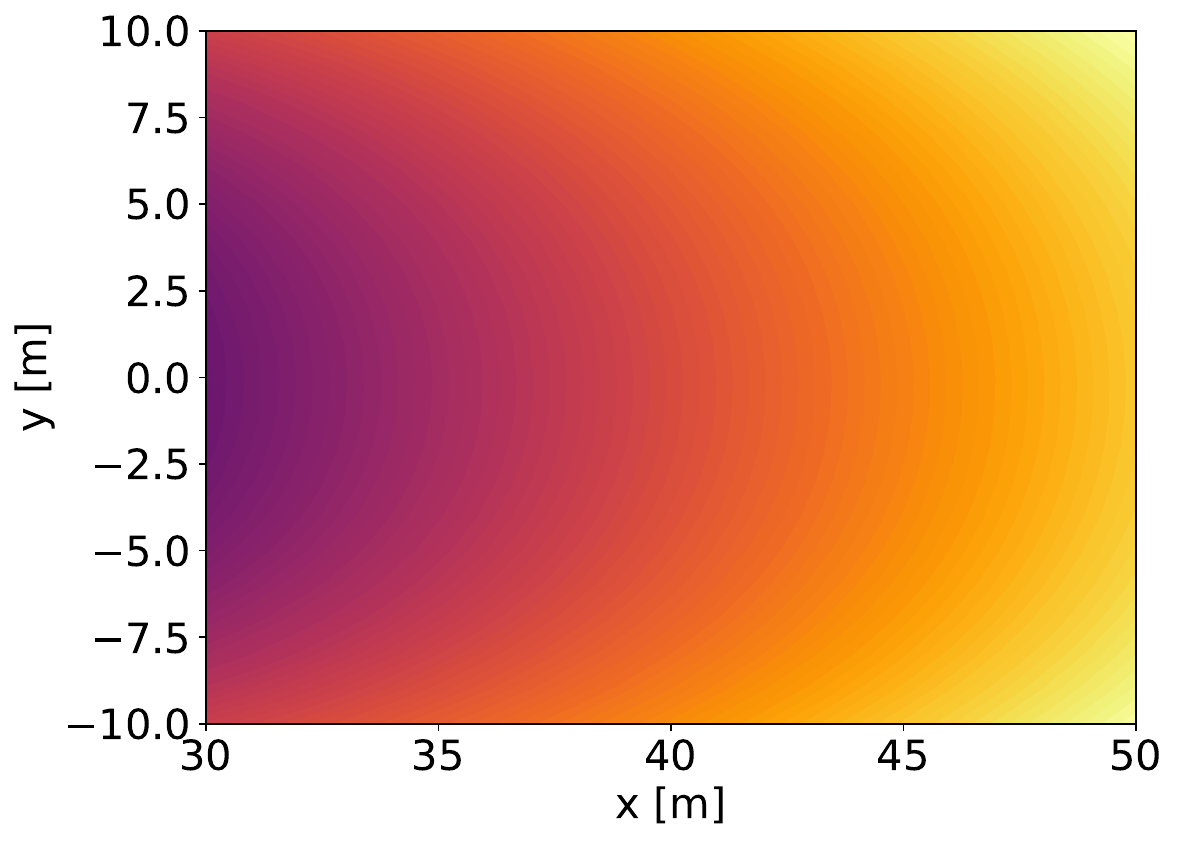}
\caption{Uniform power allocation}
\label{fig:crb_map_uniform}
\end{subfigure}%
\hfill
\begin{subfigure}[t]{0.5\columnwidth}
\centering
\includegraphics[height=0.12\textheight]{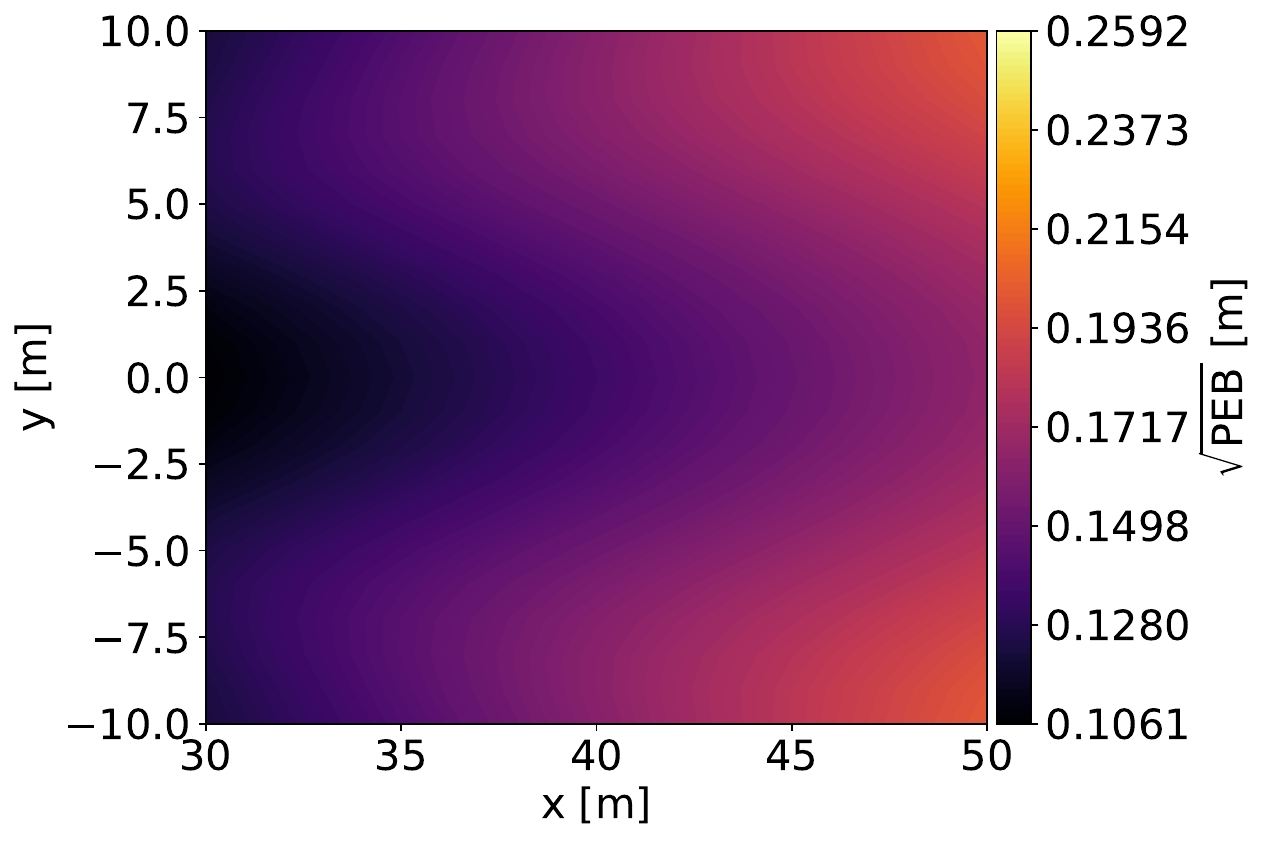}
\caption{Optimal power allocation}
\label{fig:crb_map_optimal}
\end{subfigure}
\caption{%
Evaluation of the proposed power allocation.
}
\label{fig:crb_maps}
\end{figure}

\vspace{-0.5cm}
\subsection{Localization Performance Versus SNR}
In Fig.~\ref{fig:RMSE_CRB_vs_Pow}, the \ac{rmse} and corresponding \ac{peb} of the proposed \ac{ml}-based localization from Sec.~\ref{sec:ML_localization} are plotted for three array configurations: 
the \ac{er-fas} under the synthesis model with $Q=4$ \ac{shod} bases, 
the \ac{er-fas} under the finite‐state selection model with $S=64$, 
and a traditional non‐reconfigurable array employing the optimal codebooks proposed in \cite{Keskin2022Optimal,fascista2022ris}. 
It is shown that both \ac{er-fas} configurations substantially outperform the traditional array in localization accuracy. 
Moreover, the \ac{er-fas} is more robust than a traditional array in low-\ac{snr} regimes.

\begin{figure}[!t]
\centering
\includegraphics[width=\columnwidth]{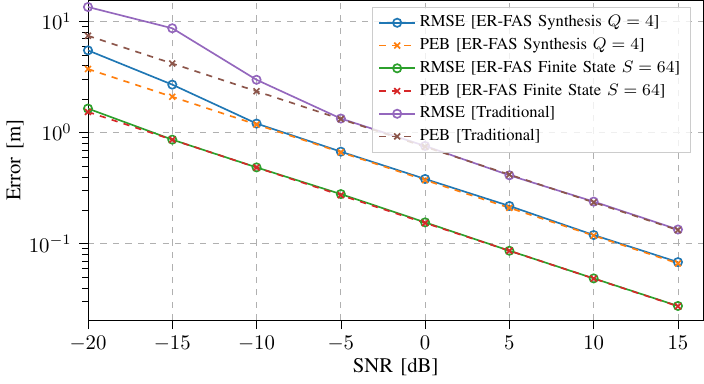}
\caption{
Localization performance versus \ac{snr}.
}
\label{fig:RMSE_CRB_vs_Pow}
\end{figure}
\vspace{-0.5cm}
\subsection{Impact of the Number of Bases and States}
Figs.~\ref{fig:CRB_vs_Q} and \ref{fig:CRB_vs_S} plot the \ac{peb} of the proposed method under the synthesis and finite‐state selection models versus the number of \ac{shod} bases $Q$ and states $S$, respectively, for various antenna geometries. 
The optimized traditional non-reconfigurable array is used as a baseline.
As expected, the performance of the proposed method improves with increasing $Q$ or $S$. 
However, the traditional array is fixed as it does not depend on either $Q$ or $S$.
Additionally, Fig.~\ref{fig:CRB_vs_Q} shows that the optimal traditional array coincides with the \ac{er-fas} synthesis design when $Q=1$. 
This equivalence follows because the first \ac{shod} basis is constant on the unit sphere. 
Consequently, the synthesis construction with $Q=1$ reduces to the optimal traditional array. 
See Appendix~\ref{app:shod_omni} for proof.
Moreover, Fig.~\ref{fig:CRB_vs_S} shows that a $5\times5$ \ac{upa} \ac{er-fas} outperforms a traditional $10\times10$ \ac{upa} for $S\ge 19$, achieving equivalent localization accuracy with significantly fewer elements, thereby simplifying practical deployment. 
Analogous observations hold for other array geometries.

\begin{figure}[!t]
\centering
\includegraphics[width=\columnwidth]{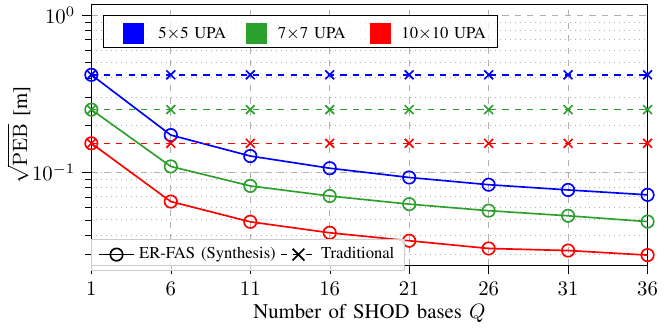}
\caption{
Localization performance versus number of \ac{shod} bases.
}
\label{fig:CRB_vs_Q}
\end{figure}

\begin{figure}[!t]
\centering
\includegraphics[width=\columnwidth]{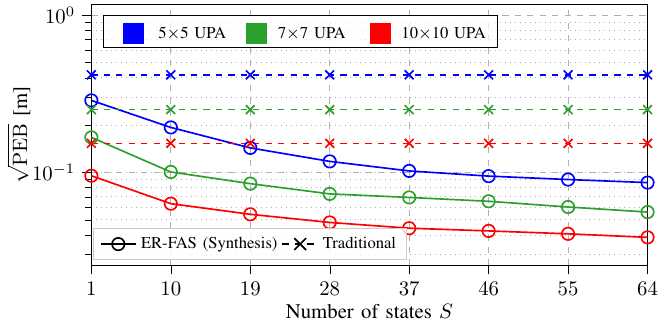}
\caption{
Localization performance versus number of states.
}
\label{fig:CRB_vs_S}
\end{figure}

\vspace{-0.5cm}
\subsection{Localization Performance Under Interference}\label{sec:simul_interference}
In this subsection, we evaluate the robustness of the proposed methods in the presence of interference. 
We also consider the traditional non-reconfigurable array as a baseline. 
To this end, we consider $I=40$ \ac{mpc_p} to model interference. 
The positions of these \ac{mpc_p} are randomly generated inside the \ac{ue} uncertainty region $\bm{\mathcal{U}}$. 
Then, we obtain the \ac{rmse} and \ac{peb} metrics versus \ac{lmr} \cite[Eq.~(24)]{fadakar2024multi} at a fixed $\mathrm{SNR}=0\,\mathrm{dB}$ in the interval $\mathrm{LMR}\in[0\,\mathrm{dB},45\,\mathrm{dB}]$ as shown in Fig.~\ref{fig:RMSE_CRB_vs_LMR}. 
Each \ac{rmse} value is computed using $1000$ Monte-Carlo trials.
As discussed in Sec.~\ref{sec:fim_ch_domain}, since the \ac{fim} analysis in this paper considers only \ac{los} path, the \ac{peb} is fixed for each \ac{lmr}. 
However, \ac{rmse} converge to the \ac{peb} at approximately $\mathrm{LMR}=15\,\mathrm{dB}$, and the proposed \ac{er-fas} is close to the \ac{peb} in $\mathrm{LMR}\in[10\,\mathrm{dB},15\,\mathrm{dB}]$. 
It is evident that the proposed method using \ac{er-fas} substantially outperforms the traditional non-reconfigurable array even in challenging low-\ac{lmr} regimes.
The performance degradation in low \ac{lmr} is due to the strong multipath components and weak \ac{los} path, resulting a model mismatch thereby degrading the localization performance. 
However, it is noteworthy that based on mmWave channel measurements, the power of the \ac{los} component is shown to be around $13\,\mathrm{dB}$ greater than that of the \ac{nlos} components \cite{muhi2010modelling, Wang2021Joint_beam}. 
Thus, the proposed method remains robust in practical scenarios.

\begin{figure}[!t]
\centering
\includegraphics[width=\columnwidth]{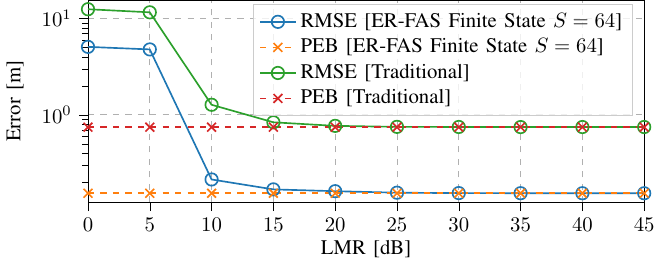}
\caption{
Localization performance versus \ac{lmr}.
}
\label{fig:RMSE_CRB_vs_LMR}
\end{figure}

\section{Conclusion}
In this paper, we investigated the joint design of \ac{bb} and \ac{em} precoders to maximize downlink localization accuracy in an \ac{er-fas}-enabled \ac{miso} system. 
We considered two reconfigurability paradigms: (i) a synthesis model, where each antenna synthesizes beampatterns from a set of orthonormal basis functions, and (ii) a finite-state selection model, where each antenna chooses from a library of predefined patterns. 
For both paradigms, we proposed low-complexity codebooks for the \ac{bb} and \ac{em} designs.
For the finite-state model, we developed an efficient \ac{bcd} algorithm to optimize the \ac{em} precoders. 
We complemented our designs with an efficient \ac{ml}-based localization algorithm and extensive simulations, demonstrating substantial \ac{peb} and \ac{rmse} gains over traditional non-reconfigurable arrays. 
Future work will extend these methods to \ac{isac} systems and assess robustness to model mismatch and hardware impairments.  
\vspace{-0.5cm}
\appendices
\section{Complex Spherical Harmonic Orthogonal Decomposition}
\label{app:shod}
\ac{shod} bases represent any square‑integrable function on the unit sphere as a weighted sum of orthonormal basis functions \cite{Costa2010Unified}.  
In antenna and array processing, \ac{shod} provides a compact way to model 3D radiation patterns.  
Each basis function is indexed by a nonnegative integer degree $\ell$ and an integer order $m$ with $-\ell\le m\le\ell$.  The degree $\ell$ controls the overall spatial frequency (number of lobes), while the order $m$ governs azimuthal variation.  Low degrees (e.g.\ $\ell=0$ or $\ell=1$) yield broad, smooth patterns; higher $\ell$ produce finer angular structure and more nulls.
%
\vspace{-0.5cm}
\subsection{Definition of the Complex SHOD Bases}
The complex spherical harmonics are defined for each degree $\ell\ge0$ and order $-\ell\le m\le\ell$ by
\begin{equation}
Y_\ell^m(\theta^{\text{el}}, \theta^{\text{az}})
= (-1)^m\,N_{\ell m}\,P_\ell^{m}(\cos\theta^{\text{el}})\,e^{\,j m\theta^{\text{az}}},
\label{eq:Ylm_complex}
\end{equation}
where the normalization constant is
$
N_{\ell m}
=\sqrt{\frac{2\ell+1}{4\pi}\,\frac{(\ell-m)!}{(\ell+m)!}},
$
and $P_\ell^{m}(x)$ denotes the associated Legendre function.
These functions satisfy the orthonormality relation
\begin{align}
\int_{-\pi}^{\pi}\!\!\int_{0}^{\pi}
& Y_\ell^m(\theta^{\text{el}}, \theta^{\text{az}})\,
\bigl(Y_{\ell'}^{m'}(\theta^{\text{el}}, \theta^{\text{az}})\bigr)^*
\,
\notag \\
&
\times 
\sin\theta^{\text{el}}\,d\theta^{\text{el}}\,d\theta^{\text{az}}
=\delta_{\ell\ell'}\,\delta_{mm'},
\end{align}
where $\delta_{pq}$ is the Kronecker delta, equal to $1$ if $p=q$ and $0$ otherwise.
To form a truncated dictionary of size $Q$, we select the first $Q$ pairs $(\ell,m)$ in the ordering
$
(\ell,m)=(0,0),\,(1,-1),\,(1,0),\,(1,1),\,(2,-2),\dots
$
and denote the $k$-th base by 
$\,Y_k(\theta^{\text{el}},\theta^{\text{az}})$.

\vspace{-0.5cm}

\subsection{Special Case: Recovering the Omni‑Directional Pattern}
\label{app:shod_omni}
The degree‑zero spherical harmonic is
$
Y_0^0(\boldsymbol{\theta})
= N_{0,0}\,P_0^0(\cos\theta^{\text{el}})
= \sqrt{\frac{1}{4\pi}},
$
a constant over the sphere.  Hence, by choosing the \ac{em} precoder vector
$
\mathbf{e} = \bigl[1,\,0,\dots,0\bigr]^\mathsf{T},
$
so that
$
g(\boldsymbol{\theta}) = \mathbf{e}^\mathsf{H}\mathbf{b}(\boldsymbol{\theta})
= Y_0^0(\boldsymbol{\theta}) = \frac{1}{\sqrt{4\pi}},
$
we can recover an \emph{omni‑directional} element radiation pattern.  

\vspace{-0.5cm}
\subsection{Total Radiated Power Conservation}
\label{app:power_conservation}
Consider the full 2D angular domain $\boldsymbol{\theta}=[\theta^{\text{el}},\theta^{\text{az}}]^\mathsf{T}$ with $\theta^{\text{el}}\in[0,\pi]$, $\theta^{\text{az}}\in[-\pi,\pi)$. 

\subsubsection{Case 1: Synthesis Model}
Since the bases $\mathbf{b}(\boldsymbol{\theta})$ are orthonormal over the sphere,
\begin{equation}
\int_{0}^{2\pi}\!\int_{0}^{\pi}
\mathbf{b}(\boldsymbol{\theta})\,
\mathbf{b}(\boldsymbol{\theta})^H
\;\sin\theta^{\text{el}}\,
d\theta^{\text{el}}\,d\theta^{\text{az}}
\;=\;\mathbf{I}_Q,
\end{equation}
Then, the total radiated power from an arbitrary antenna with a unit norm \ac{em} vector $\mathbf{e}$ can be calculated as:
\begin{align*}
&\int_{0}^{2\pi}\!\int_{0}^{\pi}
\lvert
\mathbf{e}^\mathsf{H} \mathbf{b}(\boldsymbol{\theta})
\rvert^2
\sin\theta^{\text{el}}\,
d\theta^{\text{el}}\,d\theta^{\text{az}}\\
&\quad=
\int_{0}^{2\pi}\!\int_{0}^{\pi}
\mathbf{e}^\mathsf{H}
\big[
\mathbf{b}(\boldsymbol{\theta})b(\boldsymbol{\theta})^H\bigr]
\mathbf{e}
\sin\theta^{\text{el}}\,
d\theta^{\text{el}}\,d\theta^{\text{az}}\\
&\quad=
\mathbf{e}^\mathsf{H}
\Bigl[\int_{0}^{2\pi}\!\int_{0}^{\pi}
\mathbf{b}\,\mathbf{b}^\mathsf{H}\,\sin\theta^{\text{el}}
\,d\theta^{\text{el}}\,d\theta^{\text{az}}\Bigr]
\mathbf{e}
= 
\mathbf{e}^\mathsf{H} 
\mathbf{I}_Q 
\mathbf{e}
=
1.
\end{align*}

\subsubsection{Case 2: Finite-State Selection Model}
Since $\lVert\overline{\mathbf{e}}_{m,t}\rVert_{0}=1$, we have 
$\bigl[\mathbf{g}_{t}(\boldsymbol{\theta})\bigr]_{m}
=
\overline{\mathbf{e}}_{m,t}^{\mathsf{T}}
\,
\overline{\mathbf{b}}(\boldsymbol{\theta})=\overline{b}_s(\boldsymbol{\theta})$ for some $s\in\{1,\dots,S\}$.
Thus:
\begin{equation}
\begin{split}
&\int_{0}^{2\pi}\!\int_{0}^{\pi}
\big\lvert\bigl[\mathbf{g}_{t}(\boldsymbol{\theta})\bigr]_{m}\big\rvert^2
\sin\theta^{\text{el}}\,
d\theta^{\text{el}}\,d\theta^{\text{az}}\\
&\quad=
\int_{0}^{2\pi}\!\int_{0}^{\pi}
\lvert\overline b_{s}(\boldsymbol{\theta})\rvert^2
\sin\theta^{\text{el}}\,
d\theta^{\text{el}}\,d\theta^{\text{az}}
=1.
\end{split}
\end{equation}

In both models the total radiated power over the full sphere remains $1$, confirming the energy conservation law \cite{Ying2024Reconfigurable}.

\vspace{-0.3cm}
\section{Proof of Prop.~\ref{prop:opt_structure}}\label{app:opt_structure_proof}
Inspired by \cite{Li2008Range}, $\mathbf{W}\succeq 0$ can be represented as:
\begin{equation}\label{eq:W_Q}
\mathbf{W}
=
\mathbf{Q}
\mathbf{Q}^\mathsf{H}.
\end{equation}
Next, we decompose $\mathbf{Q}$ as
$
\mathbf{Q}
=
\mathbf{\Pi}_{\mathbf{C}_{w}}
\mathbf{Q}
+
\mathbf{\Pi}_{\mathbf{C}_{w}}^{\perp}
\mathbf{Q}
$, and substitute it to \eqref{eq:W_Q} to obtain the following decomposition for $\mathbf{W}$:
\begin{equation}\label{eq:W_decompose}
\mathbf{W}
=
\mathbf{\Pi}_{\mathbf{C}_{w}}
\mathbf{Q}
\mathbf{Q}^\mathsf{H}
\mathbf{\Pi}_{\mathbf{C}_{w}}
+
\widetilde{\mathbf{W}},
\end{equation}
where
\begin{equation}\label{eq:W_tilde_def}
\widetilde{\mathbf{W}}
=
\mathbf{\Pi}_{\mathbf{C}_{w}}^{\perp}
\mathbf{Q}
\mathbf{Q}^\mathsf{H}
\mathbf{\Pi}_{\mathbf{C}_{w}}^{\perp}
+
\mathbf{\Pi}_{\mathbf{C}_{w}}
\mathbf{Q}
\mathbf{Q}^\mathsf{H}
\mathbf{\Pi}_{\mathbf{C}_{w}}^{\perp}
+
\mathbf{\Pi}_{\mathbf{C}_{w}}^{\perp}
\mathbf{Q}
\mathbf{Q}^\mathsf{H}
\mathbf{\Pi}_{\mathbf{C}_{w}}.
\end{equation}

It can be easily verified that 
\begin{equation}\label{eq:W_tilde_dependency}
\mathbf{C}_{w}^\mathsf{H}
\widetilde{\mathbf{W}}
\mathbf{C}_{w}
=
0.
\end{equation}
On the other hand, according to Lemma~\ref{lemma:W_affine}, the \ac{fim} elements only depend on the elements of the matrix $\mathbf{C}_{w}^\mathsf{H}
\mathbf{W}
\mathbf{C}_{w}\in \mathbb{C}^{3\times 3}$. 
Hence, according to \eqref{eq:W_decompose} and \eqref{eq:W_tilde_dependency}, \ac{fim} does not depend on the component $\widetilde{\mathbf{W}}$ of $\mathbf{W}$. 
Moreover, from \eqref{eq:W_tilde_def}:
\begin{equation}\label{eq:W_tilde_power}
\rm tr(\widetilde{\mathbf{W}})
=
\rm tr(
\mathbf{\Pi}_{\mathbf{C}_{w}}^{\perp}
\mathbf{Q}
\mathbf{Q}^\mathsf{H}
\mathbf{\Pi}_{\mathbf{C}_{w}}^{\perp}
)
=
\lVert 
\mathbf{Q}^\mathsf{H}
\mathbf{\Pi}_{\mathbf{C}_{w}}^{\perp}
\rVert_{F}^2
\ge 0.
\end{equation}
Thus, we conclude that $\widetilde{\mathbf{W}}$ is a component of $\mathbf{W}$ for which the \ac{fim} (and hence \ac{peb}) is not dependent on, and moreover, it contains nonnegative portion of the total transmitted power. 
Thus, for optimal solution $\mathbf{W}$ we must have $\rm tr(\widetilde{\mathbf{W}})=0$ or equivalently $\mathbf{Q}^\mathsf{H}\mathbf{\Pi}_{\mathbf{C}_{w}}^{\perp}=0$ from \eqref{eq:W_tilde_power}. 
Because otherwise, one can find a better solution by considering $\widetilde{\mathbf{W}}=0$ (or equivalently reducing the power of $\widetilde{\mathbf{W}}$ to zero) and scaling the power of the first component of $\mathbf{W}$ in \eqref{eq:W_decompose}, which would yield a better solution with lower \ac{peb}. 
Hence, we conclude that an optimal solution $\mathbf{W}$ can be represented as follows:
\begin{align}
\mathbf{W}
& =
\mathbf{\Pi}_{\mathbf{C}_{w}}
\mathbf{Q}
\mathbf{Q}^\mathsf{H}
\mathbf{\Pi}_{\mathbf{C}_{w}}
\notag \\
& =
\mathbf{C}_{w}
\underbrace{
(\mathbf{C}_{w}^{\mathsf{H}}\mathbf{C}_{w})^{-1}
\mathbf{C}_{w}^{\mathsf{H}}
\mathbf{Q}
\mathbf{Q}^\mathsf{H}
\mathbf{C}_{w}
(\mathbf{C}_{w}^{\mathsf{H}}\mathbf{C}_{w})^{-1}
}_{\stackrel{\text{def}}{=}\boldsymbol{\Xi}}
\mathbf{C}_{w}^{\mathsf{H}}
\notag \\
& =
\mathbf{C}_{w}
\boldsymbol{\Xi}
\mathbf{C}_{w}^{\mathsf{H}},
\end{align}
where $\boldsymbol{\Xi}\in\mathbb{C}^{3\times 3}$ and $\boldsymbol{\Xi}\succeq 0$, which completes the proof.

\vspace{-0.5cm}
\section{Proof of Prop.~\ref{prop:opt_precoders}}\label{app:opt_precoders}
According to Prop.~\ref{prop:opt_structure}:
\begin{align}\label{eq:w_xi}
\mathbf{W}
& =
\sum_{t=1}^{N_t}
\mathbf{W}_t
=
\sum_{t=1}^{N_t}
\mathbf{w}_t
\mathbf{w}_t^\mathsf{H}
\notag \\
& =
\mathbf{C}_w
\mathbf{\Xi}
\mathbf{C}_w^\mathsf{H}
=
\sum_{i=1}^{3}\sum_{j=1}^{3}
[\mathbf{\Xi}]_{i,j}
[\mathbf{C}_w]_{:,i}
[\mathbf{C}_w]_{:,j}^{\mathsf{H}}.
\end{align}
Thus, we should first optimize the elements of the matrix $\mathbf{\Xi}$ to minimize the objective function \eqref{opt_prob_represent}, and then find suitable parameter $N_t$ and vectors $\{\mathbf{w}_t\}_{t=1}^{N_t}$ to satisfy \eqref{eq:w_xi}. 
Observing \eqref{eq:w_xi}, in order to make the problem easier and more tractable, we relax the problem by restricting $\mathbf{\Xi}$ to be a diagonal matrix $\mathbf{\Xi}=\rm diag(\boldsymbol{\xi})$ for some vector $\boldsymbol{\xi}\in\mathbb{R}_{+}^{3}$ with nonnegative elements. 
After substituting this equation in \eqref{eq:w_xi}:
\begin{equation}
\sum_{t=1}^{N_t}
\mathbf{w}_t
\mathbf{w}_t^\mathsf{H}
=
\sum_{i=1}^{3}
[\mathbf{\xi}]_{i}
[\mathbf{C}_w]_{:,i}
[\mathbf{C}_w]_{:,i}^{\mathsf{H}},
\end{equation}
which admits the following closed-form solutions:
\begin{equation}\label{eq:w_sol}
N_t=3,\ 
\mathbf{w}_i
=
\sqrt{[\mathbf{\xi}]_{i}}
[\mathbf{C}_w]_{:,i},
\end{equation}
for $i=1,2,3$. 
It is easy to see that $[\mathbf{\xi}]_{i}$ is proportional to how much power should be allocated for $\mathbf{w}_i$. 
Thus, for convenience we define 
$
[\mathbf{\xi}]_{i}
=
\delta_i/\lVert [\mathbf{C}_w]_{:,i}\rVert^2,
$
where $1\ge \delta_i\ge 0$ determines the exact portion of the total transmitted power $P$ in \eqref{eq:y-def} dedicated for the $i$-th codeword. Thus, the first part of the proposition is proved. 

Next, to obtain the corresponding \ac{bb} and \ac{em} precoders, we use the equation \eqref{def:w} which establishes a relation between the vector $\mathbf{w}_i$ and $\mathbf{E}_i^\mathsf{T}, 
\mathbf{f}_{i}$. 
According to the definition of $\mathbf{E}_i$, we obtain the following equation:
\begin{equation}\label{eq:w_f_e}
[\mathbf{w}_i]_{(m-1)Q+1:mQ}
=
[\mathbf{f}_{i}]_m
\mathbf{e}_{m,i},
\end{equation}
for $m=1,\dots ,M$. 
After substituting \eqref{eq:w_sol} in \eqref{eq:w_f_e}:
$
[\mathbf{f}_{i}]_m
\mathbf{e}_{m,i}
=
\frac{\delta_i \mathbf{c}_{m}^{(i)}(\boldsymbol{\theta})}{\lVert [\mathbf{c}^{(i)}(\boldsymbol{\theta})\rVert}.
$
Since $\lVert \mathbf{e}_{m,i}\rVert^2 = 1$, we obtain the following unique family of solutions for $[\mathbf{f}_{i}]_m$ and $\mathbf{e}_{m,i}$:
\begin{align}
[\mathbf{f}_{i}]_m
& =
\sqrt{\delta_i}
\frac{\lVert\mathbf{c}_{m}^{(i)}(\boldsymbol{\theta})\rVert}{\lVert \mathbf{c}^{(i)}(\boldsymbol{\theta})\rVert}
e^{j \psi_{m,i}}
,\ 
\mathbf{e}_{m,i}
=
\frac{\mathbf{c}_{m}^{(i)}(\boldsymbol{\theta})}{\lVert \mathbf{c}_{m}^{(i)}(\boldsymbol{\theta})\rVert}
e^{-j \psi_{m,i}},
\end{align}
where $\{\psi_{m,i}\}_{m=1,i=1}^{M,3}$ are arbitrary phases. 
\vspace{-0.5cm}
\section{Proof of Prop.~\ref{prop:optimal_precoder_d}}\label{app:prop3_proof}
By replacing $\mathbf{q}_t(\boldsymbol{\theta})$ and $\mathbf{f}_t$ in \eqref{eq:y-def-simp-d-v1} with $\mathbf{c}(\boldsymbol{\theta})$ and $\mathbf{w}_t$, the problem becomes similar to the synthesis case discussed in \ref{sec:synthesis_optimization}. 
Thus, by analogy, in a similar manner to Prop~\ref{prop:opt_structure} and Prop.~\ref{prop:opt_precoders} (first part) for the synthesis case, we choose the following three codewords to achieve an approximate optimal performance:
\begin{equation}\label{eq:bb_opt_fixed_em}
\hat{\mathbf{f}}_i
=
\sqrt{\delta_i}
\frac{\mathbf{q}^{(i)}(\boldsymbol{\theta})^{*}}{\lVert \mathbf{q}^{(i)}(\boldsymbol{\theta})\rVert}
,
\end{equation}
for $i=1,2,3$, 
where
\begin{equation}\label{eq:q_partials}
\mathbf{q}^{(1)}(\boldsymbol{\theta})
=
\mathbf{q}_1(\boldsymbol{\theta})
,\ 
\mathbf{q}^{(2)}(\boldsymbol{\theta})
=
\frac{\partial\,\mathbf{q}_2(\boldsymbol{\theta})}{\partial\theta^{\text{el}}}
,\ 
\mathbf{q}^{(3)}(\boldsymbol{\theta})
=
\frac{\partial\,\mathbf{q}_3(\boldsymbol{\theta})}{\partial\theta^{\text{az}}}
\end{equation}

After substituting \eqref{eq:q_represent_discrete} in \eqref{eq:q_partials} and using \eqref{def:c_partials_d},
the equations \eqref{eq:digital_opt_structure_d} are obtained, which completes the proof.

\vspace{-0.3cm}
\bibliographystyle{IEEEtran}
\bibliography{IEEEabrv,bib}{}
\end{document}